\newtheorem{theorem}{Theorem}[section]
\newtheorem{lemma}[theorem]{Lemma}
\newtheorem{remark}[theorem]{Remark}
\newtheorem{definition}[theorem]{Definition}
\newtheorem{example}[theorem]{Example}
\newcommand{\sgn}{\operatorname{sgn}}
\title{Cyclic attractors of nonexpanding $n$-ary networks}
\author{
			Etan Basser-Ravitz \\
			Department of Mathematics\\
			Yale University\\
			New Haven, CT, USA\\
			 \texttt{etan.basser@yale.edu}
			\And 
		   Arman Darbar \\
		   New York University\\
		   New York, NY, USA\\
		    \texttt{agd8612@nyu.edu}
		   	\AND
		   Julia Chifman\thanks{Corresponding author}\\
   			Department of Mathematics and Statistics \\
   			American University \\
   			Washington DC, USA\\
   			\texttt{chifman@american.edu} \\
}
\date{}
\begin{document}
\maketitle

\begin{abstract}
Discrete dynamical systems in which model components take on categorical values have been successfully applied to biological networks to study their global dynamic behavior. Boolean models in particular have been used extensively. However, multi-state models have also emerged as effective computational tools for the analysis of complex mechanisms underlying biological networks. Models in which variables assume more than two discrete states provide greater resolution, but this scheme introduces discontinuities. In particular, variables can increase or decrease by more than one unit in one time step. This can be corrected, without changing fixed points of the system, by applying an additional rule to each local activation function. On the other hand, if one is interested in cyclic attractors of their system, then this rule can potentially introduce new cyclic attractors that were not observed previously. This article makes some advancements in understanding the state space dynamics of multi-state network models with a synchronous update schedule and establishes conditions under which no new cyclic attractors are added to networks when the additional rule is applied.  
Our analytical results have the potential to be incorporated into modeling software and aid researchers in their analyses of biological multi-state networks.      
\keywords{Discrete multi-state models \and Cyclic attractors \and Nonexpanding function \and Synchronous update}
\end{abstract}

\section{Introduction}
\label{sec:introduction}

Biological networks can be modeled through a variety of mathematical formalisms and the choice depends on the type of questions and data available. Quantitative models require detailed knowledge of the kinetics involved, which at times is very limited. Thus, many biological systems are modeled by discrete dynamical systems, qualitative models in which variables take on categorical values, such as {\em on/off} or {\em low/medium/high}. Much of literature and computational tools are devoted to Boolean models, however, not all biological networks are Boolean.  
The interest in models where variables take on more than two values can be traced back to R. Thomas \cite{THOMAS19911}. Examples of such biological models are: (i) multi-state models \cite{THIEFFRY1995277,SANCHEZ2001115}, (ii) models that employ a combination of Boolean and ternary variables \cite{espinosa2004,Remy4042,ABOUJAOUDE2009561}, (iii) strictly ternary models \cite{chifman2017} and (iv) four state models \cite{Setty7702}. Networks of large biosystems in which variables assume more than two discrete states do present a computational challenge, but with the development of faster computational tools, it is only plausible to assume that the scientific community will see a rise in the use of multi-state models. The aim of this article is to provide analytical results that have the potential to be incorporated into a software, including our own tool \url{https://steadycellphenotype.github.io} \cite{Knapp2021}, allowing the user to assess the properties of their multi-state biological network prior to analysis. 

This work was motivated by our extensive analyses of biological networks. In particular, the focus of our team is on intracellular iron metabolism in epithelial cells and its dysregulation in cancer \cite{chifman2012,chifman2017}. We have also explored  iron handling in macrophages \cite{Vieira2442}. Our current model includes multiple oncogenic pathways, iron homeostasis and utilization pathways, oxidative stress response and a cell G1/S phase component, totaling 64 species. With the exception of an ODE model  \cite{chifman2012}, we chose ternary logic for all models since several species in our network, including iron levels, could not be modeled as On/Off or Active/Inactive.  Allowing three states provided us with much greater resolution,  but this scheme also introduced discontinuities, meaning that variables can jump $\pm2$ units in one time step.   If more states are required for each species, these jumps become even greater.  Many multi-state models have been constructed under the constraint that species change by at most one unit (e.g., \cite{ABOUJAOUDE2009561,chifman2017}). This can be achieved by applying an additional  rule to each local activation function (Section~\ref{sec:one-step}). Intuitively, this rule takes into consideration the current state of the node at time $t$ and its next value at time $t+1$.  For example, in the ternary case,  if the current state of the node is {\em low} (0),  but a local update function for the same node assigns  a  {\em high} (2) value, then the actual value this node would receive at time $t+1$  will be {\em normal} (1).  Applying this rule does not change fixed points (Remark~\ref{rm:fixed_pt}), yet, cyclic attractors of ternary networks can present a challenge. 

For clarity, let $\mathcal{N}$ represent a ternary network in which $\pm 2$ jumps are present, and let $\mathcal{H}$ be the corresponding network in which the additional rule was added to all local activation functions of $\mathcal{N}$ to ensure that variables change by at most one unit. We have simulated and analyzed many networks under a synchronous update schedule and found that some of the networks had rather intuitive connections between cyclic attractors: (i) cyclic attractors that had $\pm 2$ jumps in $\mathcal{N}$ were absorbed in $\mathcal{H}$ by the basins of fixed points or by the basins of other remaining cyclic attractors, (ii) cyclic attractors that had only $\pm 1$ jumps remained in $\mathcal{H}$ and (iii) {\em no new} cyclic attractors were introduced. However, we have also noticed that many networks, even those with the same interaction graph and nearly identical local activation functions, had entirely {\em new} cyclic attractors created in $\mathcal{H}$ (see Example~\ref{ex:networks} and Figures~\ref{fig:1}(d)-(e)). This observation prompted our team to investigate this phenomenon further.  Additionally, we have noticed that the state space of the network $\mathcal{H}$ without new cyclic attractors preserved some rigid structure: trajectories of the networks $\mathcal{N}$ and $\mathcal{H}$ merged after some iteration, which was not true for the networks that added additional cyclic attractors (examples can be found at the beginning of the Section~\ref{sec:nonexpanding}). Other differences in the state space dynamics between $\mathcal{N}$ and $\mathcal{H}$ networks are mentioned in the  Discussion Section~\ref{sec:disc}.

Attractors of discrete dynamical systems are of particular importance since in the biological context they correspond to different phenotypes, a concept that appeared in seminal works by S.A. Kauffman, L. Glass and S. Huang \cite{KAUFFMAN1969437,GLASS1973103,sui1999}. Deciding when an attractor (cyclic or fixed point) is biologically relevant has been addressed by several authors ranging from the idea of stability \cite{klemm2005}, the size of the basin of attraction \cite{Bornholdt2008BooleanNM,10.1371/journal.pone.0195126} to ergodic attractors \cite{RIBEIRO2007743}. We encourage the reader to consult a concise review by J. D. Schwab et al. \cite{SCHWAB2020571} about Boolean models. From our own experience we have found that both fixed points and cyclic attractors can be of biological interest.  While working with ternary models, we have learned that having a network that does not introduce new cyclic attractors after imposing $\pm1$ jumps produced biologically relevant dynamics, at least in our case of iron metabolism. Whether or not it is true in general will require further investigations, but  to our knowledge, the correspondence between trajectories and cyclic attractors of the biological multi-state networks with and without jumps has not been rigorously addressed. {\em This article makes the first step in describing some conditions under which no new cyclic attractors are added to networks when $\pm 1$ jumps are enforced under a synchronous update schedule}. The results in this article  make a contribution not only to the analysis of biological multi-state networks but also to discrete applied mathematics in general.      

The article is organized as follows. Section~\ref{sec:prelims} provides some general terminology and definitions of $n$-ary networks, including the ones in which $\pm1$ jumps are enforced. We call these networks $one$-step networks. Additionally, this section singles out one class of biological networks named in this article as generalized Boolean $n$~\text{-ary} networks.  Section~\ref{sec:nonexpanding} defines a new class of networks: {\em nonexpanding} networks with respect to the Chebyshev distance. Boolean {\em non-expansive} networks under the Hamming distance were analyzed in \cite{RICHARD20111085,RICHARD20151} with a focus on fixed points. In this article, our main focus is on cyclic attractors, and also on the differences and the similarities between $n$-ary networks and their corresponding $one$-step networks.  In the same section, we then prove our first result that {\em generalized Boolean $n$-ary networks are nonexapnding} (Theorem~\ref{thm:gen_boolean}).  We also discuss our observation about trajectories and prove our second result (Theorem~\ref{thm:composition}) about the interplay between global activation functions of nonexpanding networks and their corresponding $one$-step networks. This then leads to our main result that {\em corresponding  $one$-step networks of nonexpanding networks do not add new cyclic attractors} (Thorem~\ref{thm:no_cycles}).  
We conclude the article with the application of our results to intracellular iron networks (Section~\ref{sec:application}) and with the Discussion (Section~\ref{sec:disc}).

\section{Preliminaries}\label{sec:prelims}
\label{sec:networks}

We generalize the concept of a {\em Boolean Network} on $m$ components (species) to networks in which each node of a network has an associated state in $\{0,1,2\dots,n-1\}$. Some notation in this section concerning graphs is standard (see for example \cite{west_introduction_2000,chartrand1993applied}).

\subsection{Digraphs}
\label{section:graphs}
Let $G~=~(V(G), E(G))$ be a finite digraph with a vertex set and an arc set denoted by $V(G)$ and $E(G) $, respectively. An arc  joining two vertices  $v, u~\in~V(G)$ {\em directed} from $u$ to $v$ will be denoted by $(u,v)$. The first vertex $u$ is called the {\em tail (source)} and the second vertex $v$ is the {\em head (target)} of the arc. The number of incoming arcs of a vertex $v~\in~V(G)$ is called the {\em in~-~degree} of $v$ and the number of outgoing arcs is called the {\em out~-~degree} of $v$. For a vertex $v~\in~V(G)$ the {\em in~-~neighborhood} (predecessor set) and {\em out~-~neighborhood} (successor set) of $v$ are defined by $N^{-}(v)~=~\{u~\in~V(G)~|~(u, v)~\in~E(G)\}$ and $N^+(v)~= ~\{u~\in~V(G)~|~(v,u)~\in~E(G)\}$, respectively.

An arc that joins a vertex to itself is called a {\em loop}, and two or more arcs that join the same pair of vertices are called {\em multiple (parallel) edges}.  A graph that has no loops or multiple  edges is a {\em simple graph}.  A graph $G$ could be a signed digraph, meaning there is a sign function $\sigma: E(G) \to \{+,-\}$. In this case, each arc receives either a positive $(+)$ or negative $(-)$ assignment. In a biological context, such arcs are often called {\em activation} (positive arc) or {\em inhibition} (negative arc). 

A {\em directed path} is a sequence of distinct vertices $v_1, v_2, \dots, v_k$ such that there is an arc $(v_i, v_{i+1})$ for all $i \in \{1,2, \dots, k-1\}$. If, in addition, there is an arc from $v_k$ to $v_1$ then a directed path is called a {\em cycle}. Note, directed paths and cycles are simple subgraphs. 
A digraph $G$ is {\em strongly connected or strong} if for any $v, u \in V(G)$ with $v \neq u$ there is a directed path from $u$ to $v$ and from $v$ to $u$. 

\subsection{$n$-ary Networks}
 Let $X_n:=\{0,1,2,\dots,n-1\}$ be a finite set on $n$ elements equipped with a linear order.  In the biological context the elements of $X_n$ can be associated with concentration/activity levels of biological species. 
 
An {\em $n$-ary network}  $\mathcal{N} = (G,F)$ with $n\geq 2$  is a discrete dynamical system where $G= (V(G), E(G))$ is a digraph on $m$ nodes, called an {\em interaction graph}, and $F$ is a global transition function 
 \[
 F = (f_1, f_2, \dots, f_m):X_n^m \to X_n^m.
\]
Each node $v \in V(G)$ has an associated state in $X_n$ and the value of $v$ is updated by a coordinate {\em local activation function} $f_v:X_n^m \to X_n$ whose value depends on the values of the nodes in $N^-(v)$. Note, with this definition, if $n =2$ and each local activation function is a Boolean function, then $\mathcal{N} = (G, F)$ is a Boolean network. 

In this article we assume a {\em synchronous} update, meaning all nodes are updated simultaneously based on the values of their input/incident nodes  at the previous time step and a local activation function. Thus, the dynamics of the network $\mathcal{N}$ is described by the successive iterations of $F$ and each state $\mathbf{x} = (x_1,x_2,\dots, x_m) \in X_n^m$ leads to another state, eventually converging to a fixed point or a cyclic attractor of length $k$, also termed a $k$-cycle (see diagram below). 

\begin{small}
\begin{tikzcd}
 fixed \ point & \mathbf{x} \arrow[r] 
	& F(\mathbf{x}) \arrow[r] 
		& F^2(\mathbf{x})  \arrow[r] 
			& \cdots  \arrow[r] 
				& F^t(\mathbf{x}) \arrow[loop right]  
\end{tikzcd}
			
\begin{tikzcd}
	 k-cycle \hspace{0.12in} & \mathbf{x} \arrow[r] 
	& F(\mathbf{x}) \arrow[r] 
	& \cdots  \arrow[r] 
	& F^{i}(\mathbf{x})  \arrow[r] 
	& \cdots  \arrow[r] 
	& F^{i+j}(\mathbf{x}) \arrow[d] &\\	
	 & & & &F^{i+(k-1)}(\mathbf{x})  \arrow[u] &\cdots \arrow[l] & F^{i+j+1}(\mathbf{x}) \arrow[l]
\end{tikzcd}
\end{small}	

\noindent In the above diagram, $F^i(\mathbf{x})$ means a composition of $F$ with itself $i$ times. Since the interaction graph is finite we will have $n^m$ states and each state $\mathbf{x} \in X_n^m$ will belong to the basin of attraction of only one attractor, a fixed point or a cycle attractor. 

We conclude this subsection with three examples that have the same interaction graph but different global transition functions. We will use these examples throughout the text to demonstrate our ideas. We would like to emphasize that our results in this article hold for graphs with multiple arcs and/or loops, and do not require the interaction graph to be signed.

\newpage
\begin{example}\label{ex:networks}
Let $G$ be a digraph as depicted in Figure~\ref{fig:1}(a) and let $X_3 = \{0,1,2\}$.
\begin{figure}[ht!]
\centering
\includegraphics[width=0.8\textwidth]{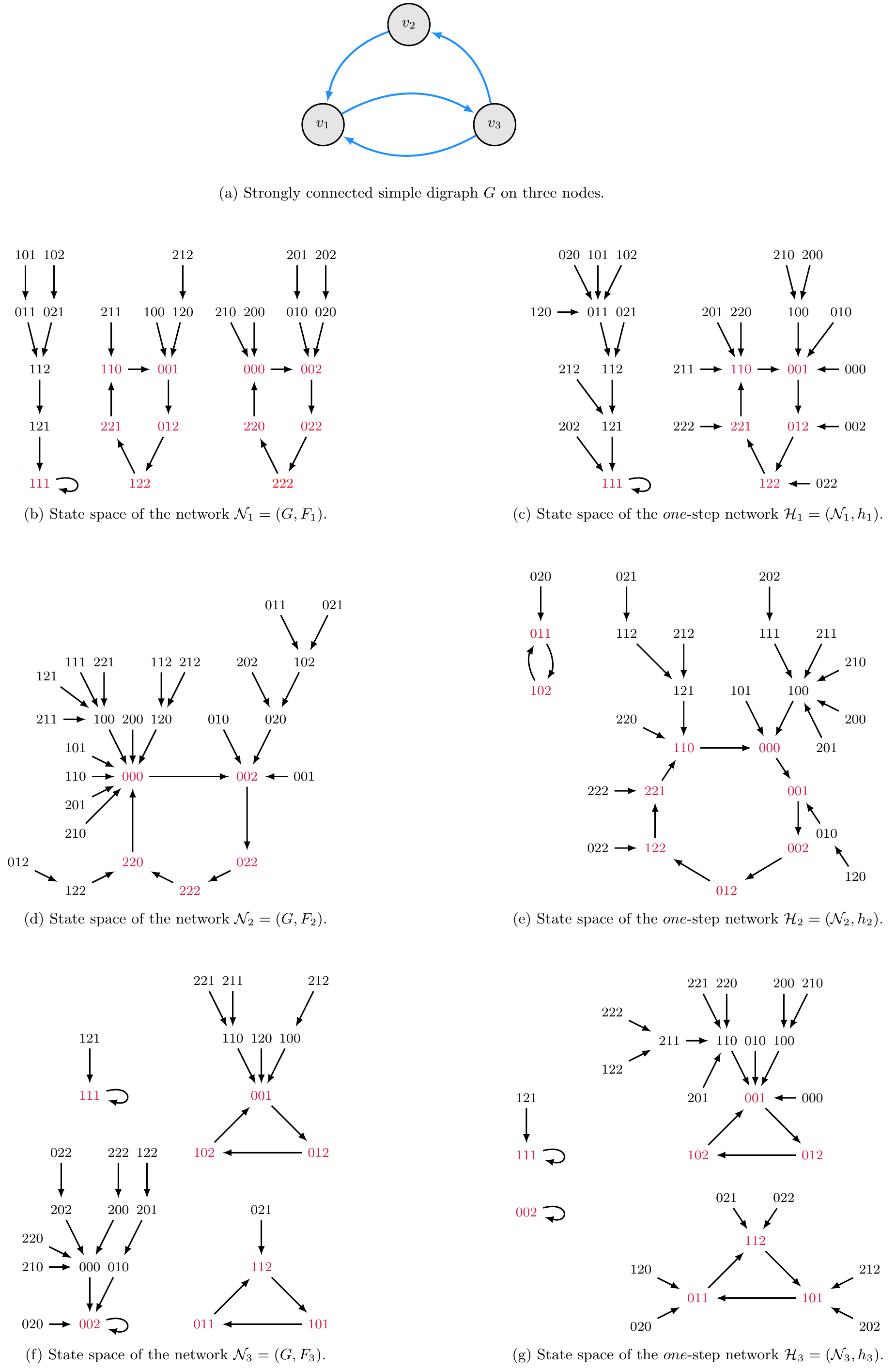}
\caption{Example of ternary network. Panel (a) Interaction graph. Panels (b), (d) and (f) display  state space  dynamics for each network as described in Example~\ref{ex:networks}. Panels (c), (e) and (g) display dynamics of the corresponding $one$-step network; $one$-step networks defined and described in Section~\ref{sec:one-step}. Fixed points and cycle attractors are colored in red. For simplicity we represent each state  $ (x_1,x_2,x_3)$ as $x_1x_2x_3$, e.g., the state $(1,1,2)$ is the same as $112$. 
}
\label{fig:1}       
\end{figure}

\noindent \textbf{Network $\mathcal{N}_1$:}  Consider a global transition function $F_1=(f_1, f_2, f_3): X_3^3 \to X_3^3$ given by 
\begingroup
\setlength{\abovedisplayskip}{4pt}
\setlength{\belowdisplayskip}{-9pt}
    \begin{align*}
    f_1(\mathbf{x}) &= \min(x_2, x_3)\\
    f_2(\mathbf{x}) &= x_3\\
    f_3(\mathbf{x}) &= 2-x_1.\\
    \end{align*}
\endgroup
Local update rules can also be represented by transition tables (Table~\ref{tab:n1}). Notice that each local activation function will either increase or decrease the future value of its target node.  Thus, with this definition of a global transition function, the interaction graph $G$ is a signed digraph with arcs $(v_2,v_1),(v_3,v_2)$ and $(v_3,v_1)$ being positive $(+)$, while the arc $(v_1,v_3)$ is negative $(-)$. The network  $\mathcal{N}_1=(G, F_1)$ has one fixed point and  two 5-cycles (Figure~\ref{fig:1}(b)). 
\vspace{-0.07in}

\begin{table}[h]
\caption{Transition tables representing local activation functions.}
\label{tab:n1}      
\begin{tabular}[t]{ll|l}
\hline\noalign{\smallskip}
$x_2$ & $x_3$ & $f_1 = \min(x_2, x_3)$  \\
\noalign{\smallskip}\hline\noalign{\smallskip}
0 & 0 & 0 \\
0 & 1 & 0 \\
0 & 2 & 0 \\
1 & 0 & 0 \\
1 & 1 & 1 \\
1 & 2 & 1 \\
2 & 0 & 0 \\
2 & 1 & 1 \\
2 & 2 & 2 \\
\noalign{\smallskip}\hline
\end{tabular}
\quad\quad
\begin{tabular}[t]{l|l}
\hline\noalign{\smallskip}
 $x_3$ & $f_2 = x_3$  \\
\noalign{\smallskip}\hline\noalign{\smallskip}
 0 & 0 \\
 1 & 1 \\
 2 & 2 \\
\noalign{\smallskip}\hline
\end{tabular}
\quad\quad
\begin{tabular}[t]{l|l}
\hline\noalign{\smallskip}
 $x_1$ & $f_3 = 2-x_1$  \\
\noalign{\smallskip}\hline\noalign{\smallskip}
 0 & 2 \\
 1 & 1 \\
 2 & 0 \\
\noalign{\smallskip}\hline
\end{tabular}
\end{table}
\vspace{-0.1in}

\noindent \textbf{Network $\mathcal{N}_2$:} Now consider a global transition function $F_2=(f_1, f_2, f_3): X_3^3 \to X_3^3$ given by 
\begingroup
\setlength{\abovedisplayskip}{4pt}
\setlength{\belowdisplayskip}{-9pt}
\begin{align*}
    f_1(\mathbf{x}) &= \min(x_2, x_3)\\
    f_2(\mathbf{x}) &= 2x_3+x_3^2 \bmod{3}\\
    f_3(\mathbf{x}) &= 2+x_1^2 \bmod{3}.\\
\end{align*}
\endgroup
In this case the interaction graph $G$ is also a signed digraph. The difference between network  $\mathcal{N}_1$  and $\mathcal{N}_2$ can be seen by comparing Table~\ref{tab:n1} and Table~\ref{tab:n2}. The network  $\mathcal{N}_2$ has one 5-cycle (Figure~\ref{fig:1}(d)).
\vspace{-0.07in}

\begin{table}[h!]
\caption{Transition tables representing local activation functions $f_2$ and $f_3$.}
\label{tab:n2}       
\begin{tabular}{l|l}
\hline\noalign{\smallskip}
 $x_3$ & $f_2 = 2x_3+x_3^2 \bmod{3}$  \\
\noalign{\smallskip}\hline\noalign{\smallskip}
 0 & 0 \\
 1 & 0 \\
 2 & 2 \\
\noalign{\smallskip}\hline
\end{tabular}
\quad\quad
\begin{tabular}{l|l}
\hline\noalign{\smallskip}
 $x_1$ & $f_3 = 2+x_1^2 \bmod{3}$  \\
\noalign{\smallskip}\hline\noalign{\smallskip}
 0 & 2 \\
 1 & 0 \\
 2 & 0 \\
\noalign{\smallskip}\hline
\end{tabular}
\end{table}
\vspace{-0.1in}

\noindent \textbf{Network $\mathcal{N}_3$:} Now consider a global transition function $F_3=(f_1, f_2, f_3): X_3^3 \to X_3^3$ given by
\begingroup
\setlength{\abovedisplayskip}{4pt}
\setlength{\belowdisplayskip}{-9pt}
    \begin{align*}
    f_1(\mathbf{x}) &= \min(x_2, x_3)\\
    f_2(\mathbf{x}) &= 2x_3+2x_3^2 \bmod{3}\\
    f_3(\mathbf{x}) &= 2-x_1.\\
    \end{align*}
\endgroup
In this case, the interaction graph $G$ is not a signed graph. The future value of the node $v_2$ first increases from 0 to 1 when the input node $v_3$ takes on the values of 0 and 1, respectively. However the value of the node $v_2$ then decreases back to 0 when the input node $v_3$ is $2$ (Table~\ref{tab:n3}). Thus, the arc from $v_3$ to $v_2$ cannot be assigned a unique sign. 
The network  $\mathcal{N}_3$ has two fixed points and two 3-cycles (Figure~\ref{fig:1}(f)).
\vspace{-0.07in}

\begin{table}[h!]
\caption{Transition table representing local activation function $f_2$.}
\label{tab:n3}       
\begin{tabular}{l|l}
\hline\noalign{\smallskip}
 $x_3$ & $f_2 = 2x_3+2x_3^2 \bmod{3}$  \\
\noalign{\smallskip}\hline\noalign{\smallskip}
 0 & 0 \\
 1 & 1 \\
 2 & 0 \\
\noalign{\smallskip}\hline
\end{tabular}
\end{table} 
\end{example}

\subsection{$one$-step $n$-ary Networks}\label{sec:one-step}

Dynamics in Figures~\ref{fig:1}(b), \ref{fig:1}(d) and \ref{fig:1}(f) demonstrate an important fact about $n$-ary networks when $n \geq 3$. A value of a node $v_i$ can increase, for example, from $0$ to $2$ in one time step and vice versa, e.g., $202 \rightarrow 020 \rightarrow 002 \rightarrow 022 \rightarrow$ etc. For $n=4$ the value can jump three units. In some applications it is desirable that the value of a node changes at most one unit in one time step (e.g., \cite{chifman2017}).  This can be accomplished  by defining another function that depends on the current state of the node at time $t$ and its next value at time $t+1$, which is updated by a local activation function \cite{veliz2010}.

\begin{definition}\label{cont}
Let $\mathcal{N}= (G,F)$ be an $n$-ary network and let $f_v: X_n^m \to X_n$ be a local activation function for $v \in V(G)$. A {\em $one$-step} local activation function $h_v$ is defined by

\begin{equation}
  h_v(\mathbf{x})= 
  \begin{cases}
    x_v+1 &\text{if  $x_v < f_v(\mathbf{x})$} \\
    x_v &\text{if  $x_v = f_v(\mathbf{x})$}\\
    x_v-1 &\text{if  $x_v > f_v(\mathbf{x})$}
  \end{cases},
\end{equation}

where $\mathbf{x}  = (x_1, x_2, \dots, x_v, \dots, x_m) \in X_n^m$ is the input vector and $x_v \in X_n$ is the current state of the node $v$. 
A {\em $one$-step} global transition function is then \[h=(h_1, h_2, \dots, h_m) : X_n^m \to X_n^m.\] 
An $n$-ary $one$-step network will be denoted by $\mathcal{H} = (\mathcal{N}, h)$.
\end{definition}

\begin{example}
Consider network $\mathcal{N}_1$ as defined in Example~\ref{ex:networks}. Then 
\[F_1(2,1,0) = (f_1(2,1,0),f_2(2,1,0),f_3(2,1,0)) = (0,0,0).\] 
Now, since $f_1(2,1,0)=0$ and $x_1=2>0$, then $h_1(2,1,0) = x_1-1=1$. Similarly we compute $h_2$ and $h_3$. Thus, $h(2,1,0)=(1,0,0)$.
\end{example}

\begin{remark}
A $one$-step local activation function $h_v(\mathbf{x})$ can also be written in the following form: 
\begin{equation}\label{eq:sgn-form}
h_v(\mathbf{x}) = x_v + \sgn(f_v(\mathbf{x}) - x_v),
\end{equation}
where
\begin{equation*}
 \sgn(w)= 
  \begin{cases}
    1 &\text{if  $w>0$} \\
    0 &\text{if  $w=0$}\\
    -1 &\text{if  $w<0$}
  \end{cases}.
\end{equation*}

\end{remark}

Note that a $one$-step function $h_v(\mathbf{x}) = f_v(\mathbf{x})$ if and only if $\abs{x_v-f_v(\mathbf{x})}\leq1$.  Thus, for binary (2-ary) networks global transition functions $F$ and $h$ are identical. We would like to point out that  in \cite{veliz2010} and \cite{chifman2017} function $h_v$ was termed {\em continuous}. To avoid confusion between continuous networks in which dynamics are described by a system of differential equations, we have decided to use the terms {\em $one$-step function} and {\em $one$-step network} to describe systems that are discrete but require that each node change at most one unit in one time step. This terminology also allows for future investigations of $k$-step networks. 

\begin{remark}\label{rm:fixed_pt}
Fixed points of $\mathcal{N}= (G,F)$ and the corresponding $one$-step network $\mathcal{H} = (\mathcal{N}, h)$ are the same. Suppose $\mathbf{x} \in X_n^m$ is a fixed point of $\mathcal{N}$. Then $F(\mathbf{x}) = \mathbf{x}$. This means that $f_v(\mathbf{x}) = x_v$ for each local activation function, and consequently Definition~\ref{cont} implies that $h_v(\mathbf{x}) = x_v$. Conversely, if $\mathbf{x}$ is a fixed point of $\mathcal{H}$ then  $h(\mathbf{x}) =\mathbf{x}$ and $h_v(\mathbf{x}) = x_v$, which is only true if $x_v = f_v(\mathbf{x})$. Thus $\mathbf{x}$ is a fixed point of $\mathcal{N}$ as well. This is not the case for cyclic attractors.
\end{remark}

Figures~\ref{fig:1}(c), \ref{fig:1}(e) and \ref{fig:1}(g) display dynamics of the $one$-step networks. Notice, the 5-cycle of the network $\mathcal{N}_1$ with two step jumps in Figure~\ref{fig:1}(b) now has been absorbed by the basins of the fixed point and the other 5-cycle in the corresponding $one$-step network $\mathcal{H}_1$. The remaining 5-cycle and the fixed point in $\mathcal{H}_1$ are identical to those in $\mathcal{N}_1$. For $\mathcal{N}_3$ and $\mathcal{H}_3$, the attractors are the same, though a fixed point $002$ in $\mathcal{H}_3$ lost all states in its basin. 
One similarity between $\mathcal{H}_1$ and $\mathcal{H}_3$ is that no new attractors have been introduced. This is not the case for the $one$-step network $\mathcal{H}_2$, which has lost its 5-cycle and two new cycles have been introduced (a 2-cycle and a 7-cycle). In Section~\ref{sec:nonexpanding} we present conditions under which $one$-step networks do not add new cyclic attractors, but before we turn to our main results, we describe a subclass of $n$-ary networks.

\subsection{Generalized Boolean  $n$-ary Networks}\label{sub:gbn} 
A subclass of $n$-ary networks that we would like to single out is the one in which each local activation function $f_v$ is constructed from three operators 
\begin{equation}\label{eq:operators}
\max(\mathbf{x}), \quad \min(\mathbf{x}), \quad \text{and} \quad \mathrm{not}(x_i) = (n-1) - x_i.  
\end{equation} 
where $\mathbf{x}  = (x_1, x_2,  \dots, x_m) \in X_n^m$ and $ x_i \in X_n$. Our team utilizes this scheme for ternary protein-protein networks as it allows us to intuitively describe interactions among biological species (see \cite{chifman2017}). Notice that the $\mathrm{not}$ operator is self inverting, meaning $\mathrm{not}(\mathrm{not}(x_i))=x_i$.  If the node $v$ has only one incident node $u$ and no loop, then depending on the biological context, the local activation function $f_v$ can be defined as {\em logical identity} $f_v (\mathbf{x}) = x_u$ or as $f_v(\mathbf{x}) =(n-1)-x_u$.   The $\max$ and $\min$ operators are idempotent since $\max(x_i,x_i) = x_i$ and $\min(x_i,x_i) = x_i$ for all $x_i \in X_n$, which means that they are {\em semilattice operators} (see \cite{VELIZCUBA2019167} about semilattice networks). Some familiar properties that are satisfied by $\max, \min$ and $\mathrm{not}$ operators are listed in Appendix~\ref{ap:proofs_examples}.  Various {\em compositions} of $\min/\max/\mathrm{not}$ are also possible.  In the case when $n=2$, the operators $\max, \min$ and $\mathrm{not}$ are just the traditional $OR$, $AND$ and $NOT$ operators, respectively. This leads to a natural definition below.  

\begin{definition} Networks in which each local activation function is a {\em composition} of the operators $\{\min,  \max, \mathrm{not}\}$ given by Equation~\ref{eq:operators} will be called {\bf generalized Boolean $n$-ary networks}, denoted by $\mathcal{N}_{GB}$.  
\end{definition}
An example of the $\mathcal{N}_{GB}$ 3-ary (ternary) network is $\mathcal{N}_1$ in Example~\ref{ex:networks}. Networks $\mathcal{N}_2$ and $\mathcal{N}_3$ are not generalized Boolean. Our next definition is an analogue of the disjunctive normal form for Boolean expressions with the difference being that  we allow for both $x_i$ and $\mathrm{not}(x_i)$ to appear in some terms. In the Boolean formalism, 
$\min(x, \mathrm{not}(x)) = 0$ for $x \in \{0,1\}$. This identity does not hold when $x \in \{0,1,2, \dots, n-1\}$ with $n>2$, unless $x=0$ or $x=n-1$. 

\begin{definition} Let  $\mathcal{N}_{GB}$ be a generalized Boolean $n$-ary network and $\mathbf{x} \in X_n^m$. A local activation function $f_v : X_n^m \to X_n$ is in \textbf{max--min form} if it is of the form 
\begin{equation}\label{eq:max_min}
f_v(\mathbf{x}) = \max(\mathcal{M}_1(\mathbf{x}), \mathcal{M}_2(\mathbf{x}), \dots, \mathcal{M}_r(\mathbf{x})), 
\end{equation}
\noindent where each $\mathcal{M}_i(\mathbf{x})= \min(a_{i1}, a_{i2}, \dots, a_{is})$, with $a_{ij} = x_k$ or $a_{ij} = \mathrm{not}(x_k)$, $x_k$ being one of the entries of the vector $\mathbf{x}$.    
In this context, each $\mathcal{M}_i$ is the $\min$ of some or all entries of the vector $\mathbf{x} = (x_1,x_2,\dots,x_n)$ with a possibility of having both $x_k$ and $\mathrm{not}(x_k)$ as part of the same $\mathcal{M}_i$ term. In the case $r=1$, $f_v(\mathbf{x}) = \mathcal{M}_1(\mathbf{x})$.\end{definition}

\begin{lemma}\label{lem:max_min}
Each local activation function $f_v$ of a generalized Boolean $n$-ary network can be written in $\max-\min$ form. 
\end{lemma}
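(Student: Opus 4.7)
The plan is to prove Lemma~\ref{lem:max_min} by structural induction on the expression tree of $f_v$, using the distributive and De~Morgan identities for $\min$, $\max$, $\mathrm{not}$ collected in Appendix~\ref{ap:proofs_examples}. Since $\mathcal{N}_{GB}$ is defined so that every local activation function is a composition built from the three operators in Equation~\ref{eq:operators} applied to the coordinates $x_1,\dots,x_m$, every such $f_v$ has a finite formation tree whose leaves are some $x_k$ and whose internal nodes are $\min$, $\max$, or $\mathrm{not}$. I will induct on the number of internal nodes (equivalently, on the depth of this tree).

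For the base case, $f_v$ is a single leaf $x_k$, and $x_k = \max(\min(x_k))$ is already in $\max$-$\min$ form with $r=1$ and $s=1$. For the inductive step I consider three cases depending on the outermost operator. If $f_v = \max(g_1,\dots,g_t)$ then by the inductive hypothesis each $g_i = \max(\mathcal{M}_{i1},\dots,\mathcal{M}_{ir_i})$ is in $\max$-$\min$ form, and $f_v$ equals the $\max$ of all the $\mathcal{M}_{ij}$, which is again in $\max$-$\min$ form (using associativity of $\max$). If $f_v = \min(g_1,\dots,g_t)$, I first put each $g_i$ in $\max$-$\min$ form by induction and then push the outer $\min$ inside using the distributive law $\min(a,\max(b,c)) = \max(\min(a,b),\min(a,c))$, applied repeatedly; associativity of $\min$ allows me to flatten nested mins of literals into a single $\min(a_{i1},\dots,a_{is})$ term. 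The result is a $\max$ of $\min$s of literals, which is exactly the required form.

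The case that requires the most care is $f_v = \mathrm{not}(g)$. By induction, $g = \max(\mathcal{M}_1,\dots,\mathcal{M}_r)$ with $\mathcal{M}_i = \min(a_{i1},\dots,a_{is_i})$. Applying De~Morgan twice, $\mathrm{not}(g) = \min(\mathrm{not}(\mathcal{M}_1),\dots,\mathrm{not}(\mathcal{M}_r)) = \min\bigl(\max(\mathrm{not}(a_{11}),\dots,\mathrm{not}(a_{1s_1})),\dots,\max(\mathrm{not}(a_{r1}),\dots,\mathrm{not}(a_{rs_r}))\bigr)$. The involution identity $\mathrm{not}(\mathrm{not}(x_k)) = x_k$ ensures that every $\mathrm{not}(a_{ij})$ is again either some coordinate $x_k$ or its negation $\mathrm{not}(x_k)$, so the inner expressions are $\max$es of literals. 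This puts $f_v$ into $\min$-of-$\max$ form, and I then apply the distributive law from the $\min$ case above to convert it back to $\max$-of-$\min$ form, giving the desired normal form.

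The main obstacle is really just the bookkeeping in the $\mathrm{not}$ case: one must verify that after De~Morgan and involution, every resulting leaf is again of the literal type permitted in the definition (either $x_k$ or $\mathrm{not}(x_k)$), and then redistribute. The distribution step in the $\min$ case can produce exponentially many $\min$ terms, but this is fine because the definition of $\max$-$\min$ form places no bound on the number $r$ of $\mathcal{M}_i$ summands or on the arity $s$ of each $\min$. The proof therefore reduces to invoking, in order, associativity, distributivity, De~Morgan, and involution, each of which is a standard identity for the three operators in Equation~\ref{eq:operators} over the linearly ordered set $X_n$.
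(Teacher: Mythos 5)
Your proof is correct and takes essentially the same approach as the paper: the paper's proof is a one-line appeal to repeated application of De Morgan's laws and the distributive property (plus idempotence to clean up duplicates), and your structural induction is simply a careful, explicit organization of that same rewriting process. No gaps; the only difference is that you spell out the case analysis and the involution bookkeeping that the paper leaves implicit.
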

\begin{proof}
Repeated application of the De Morgan's laws \eqref{subeq:e} - \eqref{subeq:f} and the Distributive property \eqref{subeq:c} (see Appendix~\ref{ap:proofs_examples}) will transform $f_v$ into $\max-\min$ form. Identical terms can be eliminated by using the fact that the $\max$ operator is idempotent. 
\end{proof}

\begin{example}\label{ex:min_max} 
Consider $f_v: X_n^3 \to X_n$ defined by \[f_v(\mathbf{x}) = \min(x_2, \max(x_1,\mathrm{not}(\min(x_2,x_3)))).\] We are going to transform $f_v$ into $\max - \min$ form. Note that 
\begin{equation}\label{eq:maxrule}
\max(x,y,z) = \max(\max(x,y),z) = \max(x,\max(y,z)).
\end{equation} The same is true for the $\min$ operator. Now,
\begin{align*}
&\min(x_2, \max(x_1,\mathrm{not}(\min(x_2,x_3))))\\ 	&\stackrel{\eqref{subeq:c}}{=} \max(\min(x_2, x_1), \min(x_2, \mathrm{not}(\min(x_2,x_3))))\\
										&\stackrel{\eqref{subeq:f}}{=}\max(\min(x_2, x_1), \min(x_2, \max(\mathrm{not}(x_2),\mathrm{not}(x_3))))\\
										&\stackrel{\eqref{subeq:c}}{=}\max(\min(x_2, x_1), \max(\min(x_2, \mathrm{not}(x_2)),\min(x_2,\mathrm{not}(x_3))))\\
										&\stackrel{\text{Eq.}\ref{eq:maxrule}}{=}\max(\min(x_1, x_2), \min(x_2, \mathrm{not}(x_2)),\min(x_2,\mathrm{not}(x_3))).				
\end{align*}
Thus, $f_v(\mathbf{x}) = \max(\mathcal{M}_1(\mathbf{x}), \mathcal{M}_2(\mathbf{x}), \mathcal{M}_3(\mathbf{x}))$, where $\mathcal{M}_1(\mathbf{x})~=~\min(x_1, x_2)$, $\mathcal{M}_2(\mathbf{x})~=~\min(x_2, \mathrm{not}(x_2))$ and 
$\mathcal{M}_3(\mathbf{x})~=~\min(x_2,\mathrm{not}(x_3))$. 
\end{example}

\section{Nonexpanding Networks}\label{sec:nonexpanding}
The main objective of this section is to establish conditions under which no new cyclic attractors  are introduced in $one$-step networks. It is quite clear from Figure~\ref{fig:1} that a $one$-step global transition function does change trajectories. It is not quite obvious, though, if there is a connection between the trajectories in Figure~\ref{fig:1}(b) and Figure~\ref{fig:1}(c), for example.  We initially speculated that a global transition function of the $one$-step network acts ``chaotically'' on a trajectory of an initial vector. To our surprise, we found that if a global transition function $F$ has some desirable properties then the corresponding $one$-step global transition function preserves some rigid structure, allowing us to establish our main results.  To make these ideas concrete we begin with a few examples.   

Consider networks $\mathcal{N}_1$ and $\mathcal{H}_1$ in Example~\ref{ex:networks} and its corresponding state spaces in Figure~\ref{fig:1}(b) and Figure~\ref{fig:1}(c). Let $\mathbf{x} = (2,0,0)$ be an initial state, then 
\[
\begin{matrix}
\text{trajectory under $\mathcal{H}_1$:} 	& 200 	& \rightarrow 	& 100 		&\rightarrow 	&001 		&\rightarrow 	& 012 			&\rightarrow  	&\cdots\\
				& \textbf{x} &			& h(\textbf{x}) 	&			& h^2(\textbf{x}) &			& h^3(\textbf{x}) 	&			&\cdots \\
				&  &			& h(\textbf{x}) 	&			& F(h(\textbf{x})) &			& F^2(h(\textbf{x})) 	&			&\cdots
\end{matrix}
\]
One can see that when $h(\mathbf{x}) = h(2,0,0) = (1,0,0)$ is computed, the trajectory from the point $h(\mathbf{x})=(1,0,0)$ is the same for both $\mathcal{N}_1$ and $\mathcal{H}_1$ networks.  This is true for all initial states in $\mathcal{H}_1$. One also checks that the same phenomenon occurs for $\mathcal{N}_3$ and $\mathcal{H}_3$ networks even though this network is not generalized Boolean and the interaction graph is not signed. 

Now, consider network $\mathcal{N}_2$ and $\mathcal{H}_2$ (state spaces are in Figure~\ref{fig:1}(d) and Figure~\ref{fig:1}(e)) and let $\mathbf{x} = (2,1,2)$.  Then 
 \[
\begin{matrix}
\text{trajectory under $\mathcal{H}_2$:} & 212 & \rightarrow & 121 &\rightarrow & 110 &\rightarrow & 000 &\rightarrow  	&\cdots\\
			& \textbf{x} && h(\textbf{x}) && h^2(\textbf{x}) && h^3(\textbf{x}) &&\cdots 
\end{matrix}
\]

\[
\begin{matrix}
\text{trajectory under $\mathcal{N}_2$:} & 121 &\rightarrow & 100 &\rightarrow & 000 &\rightarrow & 002 &\rightarrow  	&\cdots\\
			& h(\textbf{x}) && F(h(\textbf{x})) && F^2(h(\textbf{x})) && F^3(h(\textbf{x})) &&\cdots
\end{matrix}
\]
The trajectories starting from $h(\textbf{x}) = h(2,1,2) = (1,2,1)$ are quite different and while they do coincide at some states they never merge, meaning there is no state from which the trajectory is the same for both $\mathcal{N}_2$ and $\mathcal{H}_2$ networks.  

To study this phenomenon, as demonstrated by examples above, we need to be able to measure a distance between a pair of points $\mathbf{x}, \mathbf{y} \in X_n^m$. The choice of a distance below is influenced by Definition~\ref{cont} and its implication: a $one$-step local activation function $h_v(\mathbf{x}) = f_v(\mathbf{x})$ if and only if $\abs{x_v-f_v(\mathbf{x})}\leq1$, where $x_v$ is the current state of the node $v$ and $f_v$ is the local update function for $v$. Keeping this observation in mind, we use the maximum distance, also called the {\em Chebyshev distance}, between two elements $\mathbf{x}, \mathbf{y} \in X_n^m$ 
defined by \[d_\infty(\mathbf{x}, \mathbf{y}) = \max_{v \leq m} \abs{x_v-y_v}.\] 
This now leads to a natural definition of closeness. 

\begin{definition}\label{defn:close}
Let $\mathcal{N} = (G,F)$ be an $n$-ary network  and let $\mathbf{x}, \mathbf{y} \in X_n^m$. 
\begin{enumerate}
\item[(i)] A point $\mathbf{x}$ is {\bf close} to a point $\mathbf{y}$ if $d_\infty(\mathbf{x},\mathbf{y}) \leq 1$. 
\item[(ii)] A global transition function $F$ is {\bf nonexpanding} 
if for all $\mathbf{x}, \mathbf{y} \in X_n^m$ such that $d_\infty(\mathbf{x}, \mathbf{y}) \leq 1$ we have $d_\infty(F(\mathbf{x}), F(\mathbf{y})) \leq 1$. In this case we also say that an $n$-ary network $\mathcal{N}$ is a {\bf nonexpanding} network. 
\end{enumerate}
\end{definition}

The term {\em nonexpanding} is motivated by the fact that in nonexpanding networks a global transition function is a discrete analogue of a mapping satisfying the Lipschitz condition with a constant equal to one. Such mappings are called {\em nonexpansive} or {\em nonexpanding}.
Also note,  Boolean (2-ary) networks are necessarily nonexpanding.

\begin{lemma}\label{lem:nonexp}
Let $\mathcal{N} = (G,F)$ be an $n$-ary network. $\mathcal{N}$ is nonexpanding if and only if $
d_\infty(F(\mathbf{x}),F(\mathbf{y}))~\leq~d_\infty(\mathbf{x}, \mathbf{y})$ for all $\mathbf{x}, \mathbf{y} \in X_n^m$.  
\end{lemma}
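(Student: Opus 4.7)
The plan is to prove the two directions separately. The backward direction is immediate: if $d_\infty(F(\mathbf{x}),F(\mathbf{y})) \leq d_\infty(\mathbf{x},\mathbf{y})$ for all pairs, then in particular any pair with $d_\infty(\mathbf{x},\mathbf{y}) \leq 1$ gets mapped to a pair with $d_\infty(F(\mathbf{x}),F(\mathbf{y})) \leq 1$, which is exactly the nonexpanding property from Definition~\ref{defn:close}(ii).

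The forward direction is the substantive content. Given arbitrary $\mathbf{x},\mathbf{y} \in X_n^m$, let $k = d_\infty(\mathbf{x},\mathbf{y})$. The idea is to build a discrete ``straight line'' of length $k$ from $\mathbf{x}$ to $\mathbf{y}$ along which consecutive points are close, then telescope via the triangle inequality (which $d_\infty$ satisfies since it is a metric on $X_n^m$). Specifically, I would define a sequence $\mathbf{x} = \mathbf{z}_0, \mathbf{z}_1, \ldots, \mathbf{z}_k = \mathbf{y}$ by setting, for each coordinate $v$,
\begin{equation*}
(\mathbf{z}_i)_v = x_v + \min(i,\,|y_v - x_v|)\cdot \sgn(y_v - x_v),
\end{equation*}
so each coordinate marches one unit per step toward its target value and then stays put. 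Since $|(\mathbf{z}_{i+1})_v - (\mathbf{z}_i)_v| \leq 1$ holds coordinatewise, we get $d_\infty(\mathbf{z}_i, \mathbf{z}_{i+1}) \leq 1$ for $i = 0,1,\ldots,k-1$.

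Applying the nonexpanding hypothesis to each consecutive pair yields $d_\infty(F(\mathbf{z}_i),F(\mathbf{z}_{i+1})) \leq 1$. Then by the triangle inequality for $d_\infty$,
\begin{equation*}
d_\infty(F(\mathbf{x}),F(\mathbf{y})) \leq \sum_{i=0}^{k-1} d_\infty(F(\mathbf{z}_i), F(\mathbf{z}_{i+1})) \leq k = d_\infty(\mathbf{x},\mathbf{y}),
\end{equation*}
completing the proof.

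There is no serious obstacle here; the only thing to be careful about is making the interpolating path well-defined and staying inside $X_n^m$, which is automatic since each $(\mathbf{z}_i)_v$ lies between $x_v$ and $y_v$. The argument is essentially the familiar ``local Lipschitz condition extends to a global one along a geodesic'' idea, adapted to the discrete $\ell^\infty$ setting where unit-distance steps play the role of infinitesimal displacements.
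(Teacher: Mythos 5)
Your proof is correct and follows essentially the same route as the paper: the backward direction is immediate, and the forward direction chains $\mathbf{x}$ to $\mathbf{y}$ through intermediate points at pairwise $d_\infty$-distance at most $1$ and telescopes with the triangle inequality. The only difference is that you give an explicit formula for the interpolating points where the paper merely asserts their existence, which makes your version slightly more self-contained.
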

\begin{proof}
If $d_\infty(F(\mathbf{x}),F(\mathbf{y}))~\leq~d_\infty(\mathbf{x}, \mathbf{y})$ for all $\mathbf{x}, \mathbf{y} \in X_n^m$ then $d_\infty(F(\mathbf{x}),F(\mathbf{y}))~\leq~1$ whenever $d_\infty(\mathbf{x}, \mathbf{y})~\leq~1$. 
For the other direction, suppose $\mathcal{N}$ is nonexpanding and let $\mathbf{x}, \mathbf{y} \in X_n^m$.  If $\mathbf{x} = \mathbf{y}$ then the inequality holds trivially. 

Let $d_\infty(\mathbf{x}, \mathbf{y}) = k$ for some $k \in X_n$. This means that each coordinate $x_i$ of $\mathbf{x}$ differs from the coordinate $y_i$ of the vector $\mathbf{y}$ by at most $k$ units. Thus, we can find $k-1$ vectors $\mathbf{z}_j \in X_n^m$ such that $d_\infty(\mathbf{x}, \mathbf{z}_1)~\leq~1$, $d_\infty(\mathbf{z}_{k-1}, \mathbf{y})~\leq~1$ and  $d_\infty(\mathbf{z}_i, \mathbf{z}_{i+1})~\leq~1$ for $i \in \{1, 2, \dots, k-2\}$.   Now using the triangle inequality and the assumption that $\mathcal{N}$ is nonexpanding we obtain

 \begin{align*}
d_\infty(F(\mathbf{x}),F(\mathbf{y})) &\leq d_\infty(F(\mathbf{x}),F(\mathbf{z}_1)) + d_\infty(F(\mathbf{z}_{k-1}),F(\mathbf{y})) + \sum_{i=1}^{k-2} d_\infty(F(\mathbf{z}_i),F(\mathbf{z}_{i+1}))\\
& \leq \underbrace{1 + 1+ \cdots + 1}_\text{$k$ times} \\
& = k\\
& = d_\infty(\mathbf{x}, \mathbf{y}). 
\end{align*}
\end{proof}

Networks $\mathcal{N}_1$ and $\mathcal{N}_3$ in Example~\ref{ex:networks} are nonexpanding, though $\mathcal{N}_3$ is not a generalized  Boolean network. The network $\mathcal{N}_2$ fails to be nonexpanding since $d_\infty((1,1,2),(0,0,1))=1$ but $d_\infty(F_2(1,1,2),F_2(0,0,1)) =2.$

Our first main result of this section concerns generalized Boolean networks. 


\begin{theorem}\label{thm:gen_boolean}
A generalized Boolean $n$-ary network $\mathcal{N}_{GB}$ is nonexpanding.
\end{theorem}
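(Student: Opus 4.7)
The plan is to reduce the global statement to a coordinate-wise Lipschitz property and then verify that property by structural induction on the three generating operators.

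First, I would note that since $d_\infty$ is defined coordinate-wise as a maximum, $F = (f_1,\dots,f_m)$ is nonexpanding iff every coordinate function $f_v$ is \emph{nonexpanding as a map $X_n^m \to X_n$}, i.e.\ $|f_v(\mathbf{x}) - f_v(\mathbf{y})| \le d_\infty(\mathbf{x},\mathbf{y})$ for all $\mathbf{x},\mathbf{y} \in X_n^m$. By Lemma~\ref{lem:nonexp} (applied to a single-coordinate output, which is a legitimate $n$-ary network when supplemented by arbitrary other coordinates), it is enough to verify this for pairs with $d_\infty(\mathbf{x},\mathbf{y}) \le 1$. So the whole theorem reduces to showing that each $f_v$, being a composition of $\min$, $\max$, $\mathrm{not}$, and the projections $\mathbf{x}\mapsto x_i$, is $1$-Lipschitz in the Chebyshev distance.

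Next I would establish a composition principle: if $g : X_n^k \to X_n$ is $1$-Lipschitz in $d_\infty$, and $h_1,\dots,h_k : X_n^m \to X_n$ are each $1$-Lipschitz, then the composition $\mathbf{x}\mapsto g(h_1(\mathbf{x}),\dots,h_k(\mathbf{x}))$ is $1$-Lipschitz. This is immediate: for any $\mathbf{x},\mathbf{y}$,
\begin{equation*}
|g(h_1(\mathbf{x}),\dots,h_k(\mathbf{x})) - g(h_1(\mathbf{y}),\dots,h_k(\mathbf{y}))| \;\le\; \max_{i\le k} |h_i(\mathbf{x}) - h_i(\mathbf{y})| \;\le\; d_\infty(\mathbf{x},\mathbf{y}).
\end{equation*}
With this in hand, a straightforward induction on the depth of the expression defining $f_v$ reduces the theorem to verifying the Lipschitz property for the three base operators and the projections.

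The three verifications are short. Projections $\pi_i(\mathbf{x}) = x_i$ satisfy $|\pi_i(\mathbf{x})-\pi_i(\mathbf{y})|\le d_\infty(\mathbf{x},\mathbf{y})$ by definition of $d_\infty$. For $\mathrm{not}(x) = (n-1)-x$ we have $|\mathrm{not}(x) - \mathrm{not}(y)| = |x-y|$, so it is an isometry. For $\max$ and $\min$, the key observation is that if $|a_i - b_i| \le d$ for every $i$, then $a_i - d \le b_i \le a_i + d$ for every $i$; taking the maximum over $i$ gives $\max_i a_i - d \le \max_i b_i \le \max_i a_i + d$, and an identical argument with minima handles $\min$. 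This hands us the $1$-Lipschitz property for the binary (and, by iteration, the $k$-ary) $\max$ and $\min$ operators.

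I do not expect any real obstacle; the only point requiring a little care is the initial reduction, which uses that the Chebyshev distance is a supremum of coordinate differences so that a uniform bound on each $|f_v(\mathbf{x})-f_v(\mathbf{y})|$ transfers to a bound on $d_\infty(F(\mathbf{x}),F(\mathbf{y}))$. One could alternatively start from the max--min normal form of Lemma~\ref{lem:max_min}, but the structural-induction route above avoids the detour through De Morgan/distributivity manipulations and is cleaner.
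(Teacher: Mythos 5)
Your proof is correct, but it is organized differently from the paper's. The paper first invokes Lemma~\ref{lem:max_min} to rewrite each local activation function in $\max$--$\min$ normal form (which itself requires the De Morgan and distributivity manipulations), and then argues directly that each inner $\min$-term $\mathcal{M}_i$ moves by at most one unit when the input moves by at most one unit, and that the outer $\max$ does likewise. You instead prove a general composition principle --- $1$-Lipschitz building blocks compose to a $1$-Lipschitz function under $d_\infty$ --- and verify the Lipschitz property only for the generators ($\pi_i$, $\mathrm{not}$, $\min$, $\max$); the shared computational core is the same observation that if $|a_i - b_i| \le d$ for all $i$ then $|\max_i a_i - \max_i b_i| \le d$ and $|\min_i a_i - \min_i b_i| \le d$. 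Your route buys two things: it bypasses the normal-form detour entirely, and it generalizes for free to the ``almost generalized Boolean'' networks of Section~\ref{sec:application}, where an inner argument such as $2+2x_5^2 \bmod 3$ is not one of the three operators but is still $1$-Lipschitz in its variable --- the paper has to handle that case by an ad hoc re-examination of its proof, whereas your composition principle covers it immediately. One small remark: the appeal to Lemma~\ref{lem:nonexp} in your first paragraph is unnecessary overhead, since the definition of nonexpanding only requires the case $d_\infty(\mathbf{x},\mathbf{y}) \le 1$, and the ``close points to close points'' version of your composition principle already closes under composition without upgrading to the full Lipschitz statement; but this does no harm.
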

\begin{proof}
Let $\mathbf{x}, \mathbf{y} \in X_n^m$ so that $d_\infty(\mathbf{x}, \mathbf{y}) \leq 1$. Since 
\[d_\infty(F(\mathbf{x}), F(\mathbf{y})) = \max_{v \leq m} \abs{f_v(\mathbf{x}) - f_v(\mathbf{y})},\] 
we only need to show that $\abs{f_v(\mathbf{x}) - f_v(\mathbf{y})}~\leq~1$ for each local activation function. By Lemma~\ref{lem:max_min}, 
$f_v$ of the generalized Boolean $n$-ary network can be written in $\max-\min$ form. Thus, $f_v(\mathbf{x})$ and $f_v(\mathbf{y})$ can be expressed in the form of Equation~\ref{eq:max_min} 
where $\mathcal{M}_i(\mathbf{x}) = \min(a_{i1}, a_{i2}, \dots, a_{is})$ and $\mathcal{M}_i(\mathbf{y}) = \min(b_{i1}, b_{i2}, \dots, b_{is})$ with $a_{ij}~=~x_k$ or $a_{ij}~=~\mathrm{not}(x_k)$ and $b_{ij}~=~y_k$ or $b_{ij}~=~\mathrm{not}(y_k)$. Now, $x_k$ and $y_k$ are some entries of the the vectors $\mathbf{x}$ and $\mathbf{y}$, respectively, corresponding to the same index $k$.  
Since $\mathbf{x}$ is close to $\mathbf{y}$, $y_k$ differs from $x_k$ by at most one unit.   
Clearly, $\mathrm{not}(x_k)$ and $\mathrm{not}(y_k)$ also differ by at most one unit. 
Consequently, for all $j \in \{1, \dots, s\}$
\begin{equation}\label{eq:abs}
\abs{b_{ij} - a_{ij}}~\leq~1.
\end{equation}  

Without loss of generality, assume that $a_{i1} \leq a_{ij}$ for all $j \in \{1, \dots, s\}$. That is $a_{i1}~=~\min(a_{i1}, a_{i2}, \dots, a_{is})$. Suppose  $b_{ij}~ =~\min(b_{i1}, b_{i2}, \dots, b_{is})$ for some $j \in \{1, \dots, s\}$.  Then by Eq.~\ref{eq:abs} and the fact that $a_{i1}$ and $b_{ij}$ are the minima, we get
\begin{equation*}
a_{i1}-1 \leq a_{ij}-1 \leq b_{ij} \leq b_{i1} \leq a_{i1}+1.
\end{equation*}
Thus we conclude, $\abs{\mathcal{M}_i(\mathbf{x}) - \mathcal{M}_i(\mathbf{y})} \leq1$ for all $i \in \{1, \dots, r\}$. 

We can apply similar logic to the $\max$ operator to show that for all $v \in \{1, \dots, m\}$
\[\abs{f_v(\mathbf{x}) -f_v(\mathbf{y})} =\abs{\max(\mathcal{M}_1(\mathbf{x}), \dots, \mathcal{M}_r(\mathbf{x})) - \max(\mathcal{M}_1(\mathbf{y}), \dots, \mathcal{M}_r(\mathbf{y}))} \leq1.\] 

Hence, $d_\infty(F(\mathbf{x}), F(\mathbf{y})) = \max_{v \leq m} \abs{f_v(\mathbf{x}) - f_v(\mathbf{y})} \leq 1.$

\end{proof}

\subsection{$one$-step Networks and Cyclic Attractors}
As we have already stated, networks $\mathcal{N}_1$ and $\mathcal{N}_3$ in Example~\ref{ex:networks} are nonexpanding, while the network $\mathcal{N}_2$ is not (e.g., close points $(1,1,1)$ and $(0,0,1)$ are mapped by $F_2$ to $(1,0,0)$ and $(0,0,2)$, respectively). The corresponding $one$-step networks also have the same feature, meaning that $\mathcal{H}_1$ and $\mathcal{H}_3$ are nonexpanding, while $\mathcal{H}_2$ is not. It could also happen that a network that fails to be nonexpanding has a corresponding $one$-step network that maps all close points to close points. For example, a global transition function $F = (f_1,f_2): X_3^2 \to X_3^2$ of the ternary network on two nodes (Equation~\ref{eq:two_node} below), will not map all close points to close points, however, the corresponding $one$-step network will (state space for this example is provided in Appendix~\ref{ap:proofs_examples}: Example~\ref{ex:4} and Figure~\ref{fig:3})
\begin{align}\label{eq:two_node}
f_1(x_1,x_2) &= (2+x_1+x_2+2x_1 x_2 +x_1^2 x_2^2)\bmod{3}\\
f_2(x_1,x_2) &= x_1.\nonumber
\end{align}  
The lemma below summarizes this discussion. A proof can be found in Appendix~\ref{ap:proofs_examples}.
 \begin{lemma}\label{lem:nonex_h}
 Suppose $\mathcal{N} = (G,F)$ is an $n$-ary network with the corresponding $one$-step network $\mathcal{H} = (\mathcal{N},h)$. If $\mathcal{N}$ is nonexpanding, then so is $\mathcal{H}$.
 \end{lemma}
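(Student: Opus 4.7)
The plan is to prove the coordinatewise statement: for every $\mathbf{x}, \mathbf{y} \in X_n^m$ with $d_\infty(\mathbf{x}, \mathbf{y}) \leq 1$ and every node $v$, we have $|h_v(\mathbf{x}) - h_v(\mathbf{y})| \leq 1$. Taking the maximum over $v$ then gives the lemma. The natural tool is the $\sgn$-representation \eqref{eq:sgn-form}, which lets us write
\[
h_v(\mathbf{x}) - h_v(\mathbf{y}) = (x_v - y_v) + \bigl(\sgn(f_v(\mathbf{x}) - x_v) - \sgn(f_v(\mathbf{y}) - y_v)\bigr),
\]
and split on whether $x_v = y_v$ or $|x_v - y_v| = 1$.

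In the diagonal case $x_v = y_v$, the first summand vanishes and nonexpansion of $F$ gives $|f_v(\mathbf{x}) - f_v(\mathbf{y})| \leq 1$, so the two integers $f_v(\mathbf{x}) - x_v$ and $f_v(\mathbf{y}) - y_v$ differ by at most one. A quick elementary observation — that two integers at distance at most one cannot have signs differing by more than one, because $\sgn$ is monotone and its image is $\{-1,0,1\}$ — finishes this case.

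In the case $|x_v - y_v| = 1$, I would take $y_v = x_v + 1$ without loss of generality (the other sign is symmetric). Nonexpansion of $F$ yields $f_v(\mathbf{y}) \leq f_v(\mathbf{x}) + 1$, hence $f_v(\mathbf{y}) - y_v \leq f_v(\mathbf{x}) - x_v$, and monotonicity of $\sgn$ gives $\sgn(f_v(\mathbf{y}) - y_v) \leq \sgn(f_v(\mathbf{x}) - x_v)$. Plugging this into
\[
h_v(\mathbf{y}) - h_v(\mathbf{x}) = 1 + \sgn(f_v(\mathbf{y}) - y_v) - \sgn(f_v(\mathbf{x}) - x_v)
\]
immediately produces the upper bound $h_v(\mathbf{y}) - h_v(\mathbf{x}) \leq 1$, while the trivial bound $\sgn(\cdot) - \sgn(\cdot) \geq -2$ gives the lower bound $h_v(\mathbf{y}) - h_v(\mathbf{x}) \geq -1$, completing this case.

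I do not expect a real obstacle; the only delicate point is the sign-monotonicity step in the second case, which is what makes essential use of the nonexpanding hypothesis on $F$ rather than merely boundedness. Everything else is bookkeeping with $\sgn$, and the fact that $h_v$ always stays in $X_n$ (it only moves toward $f_v(\mathbf{x}) \in X_n$) means no boundary exceptions need to be handled.
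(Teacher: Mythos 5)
Your argument is correct, and it reaches the same coordinatewise bound $\abs{h_v(\mathbf{x}) - h_v(\mathbf{y})} \leq 1$ as the paper, but by a genuinely different decomposition. The paper's proof (in Appendix~\ref{ap:proofs_examples}) sets $k = f_v(\mathbf{x})$, uses nonexpansion to restrict $f_v(\mathbf{y})$ to $\{k-1,k,k+1\}$ and closeness to restrict $y_v$ to $\{x_v-1,x_v,x_v+1\}$, and then exhaustively enumerates the possible values of $h_v(\mathbf{y})$ in three cases according to whether $x_v < k$, $x_v = k$, or $x_v > k$. You instead use the representation \eqref{eq:sgn-form} to write $h_v(\mathbf{x}) - h_v(\mathbf{y})$ as $(x_v - y_v)$ plus a difference of $\sgn$ values, reducing the whole case table to two facts about $\sgn$ on integers: it is monotone, and it cannot change by $2$ across two integers at distance $1$. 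This compresses the paper's bookkeeping considerably and has the virtue of isolating exactly where the nonexpanding hypothesis enters, namely the inequality $f_v(\mathbf{y}) - y_v \leq f_v(\mathbf{x}) - x_v$ when $y_v = x_v + 1$.

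One small point of precision: in your diagonal case, the justification ``because $\sgn$ is monotone and its image is $\{-1,0,1\}$'' is not by itself sufficient --- monotonicity plus a three-element image only gives a gap of at most $2$, and for real arguments one indeed has $\sgn(1/2) - \sgn(-1/2) = 2$ with the arguments at distance $1$. What you actually need, and what holds here because $f_v(\mathbf{x}) - x_v$ and $f_v(\mathbf{y}) - y_v$ are integers, is that $\sgn(a) = 1$ and $\sgn(b) = -1$ force $a \geq 1$ and $b \leq -1$, hence $a - b \geq 2$, contradicting $\abs{a-b} \leq 1$. With that one sentence added, the proof is complete.
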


Another important observation that will be used in our proofs follows from Definition~\ref{defn:close} and Section~\ref{sec:one-step}. Given any $n$~-~ary network $\mathcal{N} = (G,F)$ and  a corresponding $one$-step network $\mathcal{H} = (\mathcal{N},h)$, $F(\mathbf{x}) = h(\mathbf{x})$  if and only if $F(\mathbf{x})$ is {\em close} to $\mathbf{x}$.  Additionally, $h^{k-1}$ is always {\em close} to $h^k$, that is $d_\infty(h^{k}(\mathbf{x}),h^{k-1}(\mathbf{x}))~\leq~1$, where $h^i$ is the composition of $h$ with itself $i$ times.  
 
Before we introduce and prove our main results we will need the following lemma. In Appendix~\ref{ap:proofs_examples} we introduce a few other small results.
   
 \begin{lemma}\label{lem:properties}
 Suppose $\mathcal{N} = (G,F)$ is a nonexpanding $n$-ary network with the corresponding $one$-step network $\mathcal{H} = (\mathcal{N},h)$. Let $\mathbf{x} \in X_n^m$ and $i \in \mathbb{N}$. If $F(\mathbf{x})~=~h(\mathbf{x})$ then $F^i(\mathbf{x})~=~h^i(\mathbf{x})$.
\end{lemma}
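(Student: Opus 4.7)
The plan is to induct on $i$, leaning on two facts from the preceding text: (a) for any $\mathbf{y} \in X_n^m$, the equality $F(\mathbf{y}) = h(\mathbf{y})$ holds if and only if $F(\mathbf{y})$ is close to $\mathbf{y}$; and (b) $d_\infty(h^{k-1}(\mathbf{x}), h^k(\mathbf{x})) \leq 1$ automatically, since each coordinate of $h$ moves by at most one unit in a single step. The base case $i=1$ is the hypothesis of the lemma (and $i=0$ is trivial because $F^0 = h^0$ is the identity), so all the work is in the inductive step.

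For the inductive step I would assume $F^k(\mathbf{x}) = h^k(\mathbf{x})$ for every $k$ with $0 \leq k \leq i$ and aim to extend this to $k = i+1$. Writing $F^{i+1}(\mathbf{x}) = F(F^i(\mathbf{x})) = F(h^i(\mathbf{x}))$, fact (a) reduces the target equality $F(h^i(\mathbf{x})) = h(h^i(\mathbf{x}))$ to verifying the closeness $d_\infty(F^{i+1}(\mathbf{x}), h^i(\mathbf{x})) \leq 1$. To obtain that closeness, I would use (b) to note that $h^{i-1}(\mathbf{x})$ is close to $h^i(\mathbf{x})$, rewrite via the inductive hypothesis as $F^{i-1}(\mathbf{x})$ close to $F^i(\mathbf{x})$, and then apply the nonexpansivity of $F$ (Lemma~\ref{lem:nonexp}) to conclude that $F^i(\mathbf{x})$ is close to $F^{i+1}(\mathbf{x})$. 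Since $F^i(\mathbf{x}) = h^i(\mathbf{x})$, this is exactly the required closeness, and (a) then yields $F(h^i(\mathbf{x})) = h(h^i(\mathbf{x})) = h^{i+1}(\mathbf{x})$, closing the induction.

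The main subtlety is that nonexpansivity alone cannot propagate closeness along the $F$-orbit directly from the single-step hypothesis $F(\mathbf{x}) = h(\mathbf{x})$: a priori there is no reason to believe $F^{j-1}(\mathbf{x})$ is close to $F^j(\mathbf{x})$ for $j \geq 2$, and without that we cannot invoke Lemma~\ref{lem:nonexp} to pass to the next iterate. The device that makes the argument go through is to use the $h$-orbit, whose consecutive terms are automatically close by construction, as a stand-in witness of closeness along the $F$-orbit; the induction bootstraps the equality of the two orbits so that each invocation of (b) can be transferred to $F$. Beyond this observation the proof is a direct chain of substitutions with no additional case analysis.
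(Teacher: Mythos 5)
Your proof is correct and follows essentially the same induction as the paper: the hypothesis gives $d_\infty(F(\mathbf{x}),\mathbf{x})\leq 1$, and nonexpansivity propagates this closeness along the orbit so that the equivalence $F(\mathbf{y})=h(\mathbf{y}) \iff d_\infty(F(\mathbf{y}),\mathbf{y})\leq 1$ applies at each step. One quibble with your closing remark: nonexpansivity \emph{does} propagate closeness directly along the $F$-orbit from the single-step hypothesis (the paper's proof simply chains $d_\infty(F^{i}(\mathbf{x}),F^{i-1}(\mathbf{x}))\leq\cdots\leq d_\infty(F(\mathbf{x}),\mathbf{x})\leq 1$), so your detour through the $h$-orbit is an equivalent reformulation rather than a necessary device.
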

\begin{proof}
Note that $F(\mathbf{x}) = h(\mathbf{x})$ if and only if $d_\infty(F(\mathbf{x}), \mathbf{x}) \leq1$. When $i=1$, the base case holds by assumption. Now suppose $F^{i-1}(\mathbf{x}) = h^{i-1}(\mathbf{x})$. 
Since $F$ is nonexpanding, we can apply Lemma~\ref{lem:nonexp} repeatedly to obtain desired result:
\begin{align*}
d_\infty(F^{i}(\mathbf{x}), F^{i-1}(\mathbf{x})) 	&= d_\infty(F(F^{i-1}(\mathbf{x})), F(F^{i-2}(\mathbf{x})))\\
										&\leq d_\infty(F^{i-1}(\mathbf{x}), F^{i-2}(\mathbf{x}))\\
										&\leq d_\infty(F^{i-2}(\mathbf{x}), F^{i-3}(\mathbf{x}))\\
										& \vdots \\
										&\leq d_\infty(F(\mathbf{x}), \mathbf{x})\\
										&\leq 1.
\end{align*}

Thus, $F^{i}(\mathbf{x}) = h(F^{i-1}(\mathbf{x})) = h(h^{i-1}(\mathbf{x})) = h^i(\mathbf{x})$.
\end{proof}
 
Our next Theorem addresses the discussion at the beginning of Section~\ref{sec:nonexpanding} about trajectories. The vital observation we have made about trajectories of nonexpanding networks is that they eventually merge. The maximum number of compositions a $one$~-~step function $h$ must make before the trajectory of the $one$-step network $\mathcal{H}$ becomes the same as for the network $\mathcal{N}$, depends on the number of states $n$ but not the number of nodes m. 

\begin{theorem}\label{thm:composition}
Suppose $\mathcal{N} = (G,F)$ is an $n$-ary nonexpanding network with the corresponding $one$-step network $\mathcal{H} = (\mathcal{N},h)$. Then, for all $\mathbf{x} \in X_n^m$ and $i \in \mathbb{N}$ \[F^i(h^{n-2}(\mathbf{x})) =h^i( h^{n-2}(\mathbf{x})).\]
\end{theorem}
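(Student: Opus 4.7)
By Lemma~\ref{lem:properties}, once I can show that $F(\mathbf{y}) = h(\mathbf{y})$ for $\mathbf{y}:=h^{n-2}(\mathbf{x})$, the identity $F^i(\mathbf{y})=h^i(\mathbf{y})$ follows for every $i\in\mathbb{N}$. By the definition of the one-step function, $F(\mathbf{y})=h(\mathbf{y})$ is equivalent to $d_\infty(F(\mathbf{y}),\mathbf{y})\leq 1$. So my plan is to fix an arbitrary coordinate $v$ and show, purely from the dynamics of that coordinate, that $|f_v(\mathbf{y})-y_v|\leq 1$.

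Fix $v$ and abbreviate $a_k := (h^k(\mathbf{x}))_v$ and $b_k := f_v(h^k(\mathbf{x}))$, so that the relevant ``gap'' is $g_k := b_k - a_k \in \{-(n-1),\dots,n-1\}$. The definition of $h$ gives $a_{k+1}-a_k = \sgn(g_k) \in \{-1,0,1\}$; the fact that $\mathcal{N}$ is nonexpanding combined with $d_\infty(h^{k+1}(\mathbf{x}),h^k(\mathbf{x}))\leq 1$ gives $|b_{k+1}-b_k|\leq 1$. Subtracting these two estimates yields the one-step law
\[
g_{k+1}\in[g_k-2,\,g_k]\ \text{if}\ g_k\geq 1,\qquad g_{k+1}\in[-1,1]\ \text{if}\ g_k=0,\qquad g_{k+1}\in[g_k,\,g_k+2]\ \text{if}\ g_k\leq -1.
\]
Two consequences stand out. (A) \emph{Absorption:} if $|g_k|\leq 1$, then $|g_{k+1}|\leq 1$, so once the gap enters $\{-1,0,1\}$ it stays there. (B) \emph{Positional bound:} while $g_k\geq 2$, $a_k$ must strictly increase by $1$ each step, and $b_k=a_k+g_k\leq n-1$ forces $g_k\leq (n-1)-a_k$.

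Now assume $g_0\geq 2$; the case $g_0\leq -2$ is symmetric, and $|g_0|\leq 1$ is immediate from (A). As long as $g_0,g_1,\dots,g_{t-1}\geq 2$, item (B) gives $a_t=a_0+t$ and $g_t\leq (n-1)-a_0-t\leq (n-1)-t$, so the condition $g_t\geq 2$ can persist for at most $t\leq n-3$ steps. Hence at step $t=n-2$ the gap has either already entered the absorbing region $\{-1,0,1\}$ at some earlier step (in which case (A) keeps $|g_{n-2}|\leq 1$), or else $g_{n-3}=2$ and the positional bound at step $n-2$ with $a_{n-2}=n-2$ forces $g_{n-2}\in[0,1]$. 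Either way, $|g_{n-2}|\leq 1$. Since this holds for every coordinate $v$, $d_\infty(F(\mathbf{y}),\mathbf{y})\leq 1$, so $F(\mathbf{y})=h(\mathbf{y})$ and Lemma~\ref{lem:properties} closes the argument.

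\textbf{Main obstacle.} The subtlety I expect to have to highlight is that $|g_k|$ is \emph{not} strictly decreasing—the one-step law permits $g_{k+1}=g_k$—so a naive ``at most one unit of gap is closed per step'' counting argument does not directly deliver the bound. The exact threshold $n-2$ is not a consequence of the number of nodes $m$ (it is independent of $m$, as the statement correctly reflects) but rather of the positional constraint $a_k\in\{0,\dots,n-1\}$: a large gap can persist only while there is still room for $a_k$ to climb (or descend), and that room is exhausted after $n-2$ steps in the worst case.
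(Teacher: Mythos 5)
Your proposal is correct and follows essentially the same route as the paper: both reduce the theorem to showing $d_\infty\bigl(F(h^{n-2}(\mathbf{x})),h^{n-2}(\mathbf{x})\bigr)\leq 1$ coordinatewise, both exploit nonexpandingness to show the target $f_v$ moves by at most one unit per iterate while the coordinate climbs monotonically toward it, both use the bound $n-1$ on states to force the gap into $\{-1,0,1\}$ within $n-2$ steps, and both then invoke Lemma~\ref{lem:properties}. The only difference is organizational: the paper runs the induction backwards (a gap of size at least $2$ at step $i$ forces the same at every earlier step, hence $h_v(h^i(\mathbf{x}))\geq i+1$), whereas you run it forwards via the explicit recursion $g_{k+1}\in[g_k-2,g_k]$ with its absorption and positional bounds — which makes the non-monotonicity subtlety you flag more visible, but is the same argument.
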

\begin{proof}
First we show that for $0 \leq i\leq n-2$ and $\mathbf{x} \in X_n^m$ 
\begin{equation*}
d_\infty(F(h^i(\mathbf{x})), h(h^i(\mathbf{x}))) = \max_{v \leq m} \abs{f_v(h^i(\mathbf{x})) - h_v(h^i(\mathbf{x}))}
								      \leq	(n-2) - i. 
\end{equation*}
In particular, we show the above inequality holds for each local activation function.  
Recall, $\abs{f_v ({h^{i}(\mathbf{x})}) - h_v (h^{i-1}(\mathbf{x}))}~\leq~1$ if and only if $f_v ({h^{i}(\mathbf{x})})~=~h_v (h^{i}(\mathbf{x}))$. 
\\

\noindent Thus, we may assume that  $f_v ({h^{i}(\mathbf{x})})~>~h_v (h^{i}(\mathbf{x}))$. The other inequality follows analogously. With this assumption, 
\begin{equation}\label{eq:1}
h_v (h^{i}(\mathbf{x})) = h_v (h^{i-1}(\mathbf{x})) +1 \ \text{and} \ f_v ({h^{i}(\mathbf{x})}) - h_v (h^{i-1}(\mathbf{x})) > 1.
\end{equation}
Additionally, since $h^{i-1}(\mathbf{x})$ is close to $h^{i}(\mathbf{x})$ and $F$ is nonexpanding, we get
\begin{align*}
f_v(h^{i-1}(\mathbf{x}))+1 	&\geq f_v(h^i(\mathbf{x}))\\
					&> h_v(h^i(\mathbf{x}))\\
					&=h_v (h^{i-1}(\mathbf{x})) +1.
\end{align*}
Specifically, 
\begin{equation}\label{eq:2}
f_v(h^{i-1}(\mathbf{x})) > h_v (h^{i-1}(\mathbf{x})).
\end{equation}

Applying this argument repeatedly, we can show that Equations~(\ref{eq:1})~--~(\ref{eq:2}) hold for any $k~\in~\{0, 1, \dots, i\}$. Therefore, by the definition of a $one$-step function and ~{Remark~\ref{eq:sgn-form}},
\begin{align}\label{eq:3}
h_v(h^i(\mathbf{x})) 	&= x_v + \sum_{k=0}^i \sgn(f_v(h^k(\mathbf{x})) - h_v(h^{k-1}(\mathbf{x})))\nonumber \\
				&= x_v +(i+1),
\end{align}

where $h_v(h^{k-1}(\mathbf{x})) = x_v$ for $k=0$. Since the largest possible value for $f_v ({h^{i}(\mathbf{x})}) $ is $n-1$ and $h_v(h^i(\mathbf{x})) \geq i+1$ by Equation~(\ref{eq:3}), \[f_v ({h^{i}(\mathbf{x})}) - h_v (h^{i}(\mathbf{x})) \leq (n-1)-(i+1) = (n-2)-i.\]
Using similar logic for the other assumption, $f_v ({h^{i}(\mathbf{x})})~<~h_v (h^{i}(\mathbf{x}))$ we get
\[h_v ({h^{i}(\mathbf{x})}) - f_v (h^{i}(\mathbf{x})) \leq (n-1)-(i+1) = (n-2)-i.\]
Thus, we conclude that for all $0\leq i \leq n-2$ \[d_\infty(F(h^i(\mathbf{x})), h(h^i(\mathbf{x}))) \leq (n-2) - i.\] 
In particular, when $i~=~n-2$ then $d_\infty(F(h^{n-2}(\mathbf{x})), h(h^{n-2}(\mathbf{x}))) = 0$. This means that $F(h^{n-2}(\mathbf{x}))~=~h(h^{n-2}(\mathbf{x}))$. 

Now Lemma~\ref{lem:properties}~ implies \[F^i(h^{n-2}(\mathbf{x})) =h^i( h^{n-2}(\mathbf{x})).\]
\end{proof}

Equipped with Theorem~\ref{thm:composition}, we are now ready to prove our main objective of this article. In particular, we show that when a network is nonexpanding then  a cyclic attractor of the network $\mathcal{H}$ is also an attractor of the network $\mathcal{N}$, implying that no new cyclic attractors are introduced in $one$-step networks.   
 
\begin{theorem}\label{thm:no_cycles}
Let $\mathcal{N} = (G,F)$ be an $n$-ary nonexpanding network with the corresponding $one$-step network $\mathcal{H}~=~(\mathcal{N},h)$. If $\mathcal{C}$ is a $k$-cycle of the network $\mathcal{H}$ then $\mathcal{C}$ is a $k$-cycle of the network $\mathcal{N}$. 
\end{theorem}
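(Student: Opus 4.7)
The plan is to leverage Theorem~\ref{thm:composition} together with the observation that a cyclic attractor of $\mathcal{H}$ is, by definition, invariant under $h$. Given a $k$-cycle $\mathcal{C}$ of $\mathcal{H}$, pick any $\mathbf{x} \in \mathcal{C}$. Because $\mathcal{C}$ is $h$-invariant and every point on a cycle is recurrent under $h$, the iterate $\mathbf{y} := h^{n-2}(\mathbf{x})$ also lies on $\mathcal{C}$, and in fact $\mathcal{C} = \{\mathbf{y}, h(\mathbf{y}), \ldots, h^{k-1}(\mathbf{y})\}$ with these $k$ elements all distinct (since $k$ is the minimal period of $\mathbf{y}$ under $h$).

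Next I would invoke Theorem~\ref{thm:composition} at the point $\mathbf{x}$: it gives $F^i(\mathbf{y}) = h^i(\mathbf{y})$ for every $i \in \mathbb{N}$. In particular, setting $i = k$ yields $F^k(\mathbf{y}) = h^k(\mathbf{y}) = \mathbf{y}$, so $\mathbf{y}$ is periodic under $F$, with period dividing $k$. Moreover, the $F$-orbit of $\mathbf{y}$,
\[
\{\mathbf{y}, F(\mathbf{y}), F^2(\mathbf{y}), \ldots, F^{k-1}(\mathbf{y})\} = \{\mathbf{y}, h(\mathbf{y}), h^2(\mathbf{y}), \ldots, h^{k-1}(\mathbf{y})\} = \mathcal{C},
\]
consists of exactly $k$ distinct elements (inherited from the $h$-cycle), so the $F$-period of $\mathbf{y}$ cannot be a proper divisor of $k$ — it is exactly $k$. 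Hence $\mathcal{C}$ is a $k$-cycle of $\mathcal{N}$.

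The only conceptual step that needs a little care is the alignment of the sets: we are identifying the $F$-cycle through $\mathbf{y}$ with the $h$-cycle through $\mathbf{y}$. That identification is immediate from $F^i(\mathbf{y}) = h^i(\mathbf{y})$ for all $i$, which is precisely what Theorem~\ref{thm:composition} delivers for an element of the form $h^{n-2}(\mathbf{x})$. The main obstacle, if there is one, is really just ensuring that we apply the theorem at a point already on $\mathcal{C}$; this is handled by choosing $\mathbf{y} = h^{n-2}(\mathbf{x})$ and using $h$-invariance of $\mathcal{C}$. No additional use of nonexpansion is needed beyond what Theorem~\ref{thm:composition} (via Lemma~\ref{lem:properties}) already encodes.
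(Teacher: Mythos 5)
Your proposal is correct and follows essentially the same route as the paper: both apply Theorem~\ref{thm:composition} at the point $h^{n-2}(\mathbf{x})$, observe that this point still lies on $\mathcal{C}$ by $h$-invariance, and conclude that the $F$-iterates coincide with the $h$-iterates and hence trace out $\mathcal{C}$. Your added remark that the $F$-period is exactly $k$ (not a proper divisor) is a nice touch the paper leaves implicit.
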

\begin{proof} Let $\mathcal{C}=\{\mathbf{x}, h(\mathbf{x}), h^2(\mathbf{x}), \dots, h^{k-1}(\mathbf{x})\}$ be any $k$-cycle of the network $\mathcal{H}$, where $h^k(\mathbf{x}) = \mathbf{x}$ for $k \geq 1$ and $h^j(h^k(\mathbf{x})) \in \mathcal{C}$ for any $j \in \mathbb{N}$. 
Thus, if $k \leq n-2$ then $n-2 = k+j$, for some $j \in \mathbb{N}$, and \[h^{n-2}(\mathbf{x}) = h^{k+j}(\mathbf{x}) =h^j(h^k(\mathbf{x})) = h^j(\mathbf{x}) \in \mathcal{C}.\]  Also, if $k > n-2$ then trivially $h^{n-2}(\mathbf{x})  \in \mathcal{C}$. Thus, $h^i( h^{n-2}(\mathbf{x})) = h^k(\mathbf{x}) = \mathbf{x}$ for some $i$. Applying Theorem~\ref{thm:composition} repeatedly, we get
\begin{align*}
F^i(h^{n-2}(\mathbf{x})) = h^i( h^{n-2}(\mathbf{x})) &= \mathbf{x}\\
F^{i+1}(h^{n-2}(\mathbf{x})) = h^{i+1}( h^{n-2}(\mathbf{x})) &= h(\mathbf{x})\\
\vdots \\
F^{i+k}(h^{n-2}(\mathbf{x})) = h^{i+k}( h^{n-2}(\mathbf{x})) &= h^k(\mathbf{x}).
\end{align*}
Thus, $\mathcal{C}$ is also a $k$-cycle of $\mathcal{N}$.   
\end{proof}

\section{Application to Intracellular Iron Metabolism}\label{sec:application}
To demonstrate the utility of our analytical results, we use our previously published model of intracellular iron regulation specific to normal breast epithelial cells \cite{chifman2017}. This model dynamically links the iron core network (depicted in green in Figure~\ref{fig:2}) to iron utilization, oxidative stress response and oncogenic pathways. Briefly, almost all living organisms require iron for cellular respiration, oxygen transport, DNA synthesis and energy production. However, iron has an ability to exist in various oxidation states and can contribute to the formation of reactive oxygen species (ROS) that can damage DNA and other cellular structures.  Iron dysregulation can lead to iron overload or deficiency, both of which are detrimental to the organism. Additionally, altered iron metabolism, signified by the reduced intracellular iron export (Fpn) and increased  iron import (TfR1), has been well documented in tumors (see recent review by  Brown et al. \cite{Brown2020}). The model presented in \cite{chifman2017} was one of the first attempts to better understand the connection between intracellular iron metabolism and cancer.  

\begin{figure}[t]
\centering
  \includegraphics[scale=0.52]{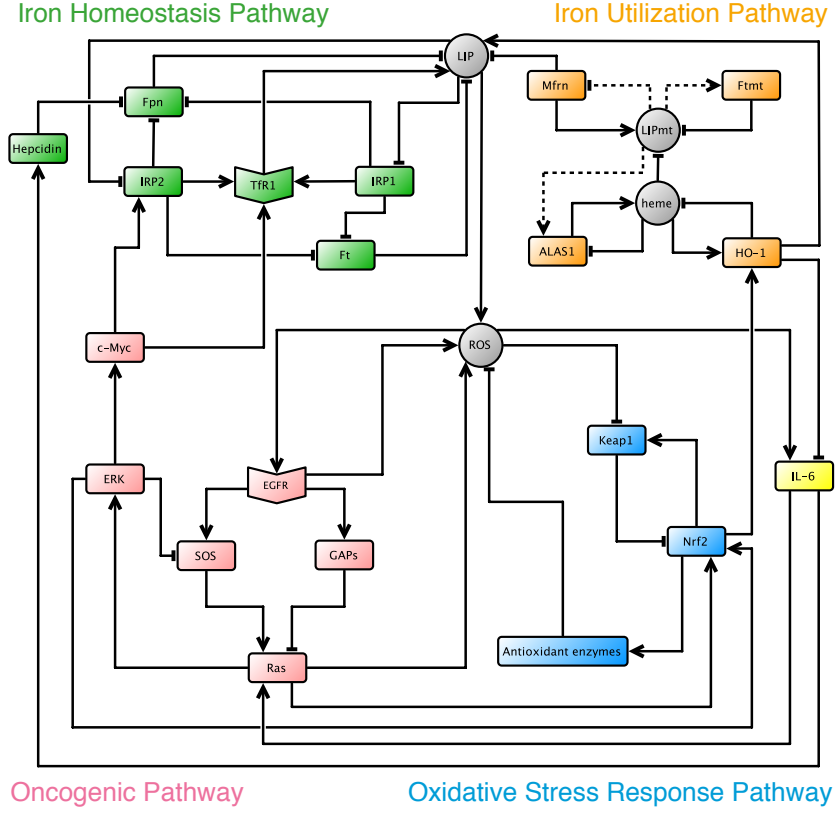}
\caption{Intracellular Iron Network. Original image as published in Chifman et al. (2017) \cite{chifman2017}. Activation/upregulation is represented by arrows and inhibition/downregulation by hammer heads.
}
\label{fig:2}       
\end{figure}
 For ease of notation, we use $x_i$ in place of the biological species, as listed below. See original article for the description of each variable and biological background in general.   
 
 \noindent\rule{\textwidth}{0.6pt} 
\begin{equation*}
\setcounter{MaxMatrixCols}{20}
\setlength\arraycolsep{1.2mm}
\begin{matrix}
\text{LIP} & \text{TfR1} & \text{Fpn} & \text{Ft} & \text{IRP1} & \text{IRP2} & \text{Hep} & \text{HO-1} & \text{ALAS1} & \text{Heme} & \text{ROS} & \text{AE} \\
x_1 & x_2 & x_3 & x_4 & x_5 & x_6 & x_7 & x_8 & x_9 & x_{10} & x_{11} & x_{12} \\

 & & & & & & & & & & & \\
\text{Nrf2} & \text{Keap1} & \text{IL-6} & \text{Ras} & \text{SOS} & \text{ERK} & \text{c-Myc} & \text{GAPs} & \text{EGFR} & \text{LIPmt} & \text{Mfrn} & \text{Ftmt}\\
x_{13} & x_{14} & x_{15} & x_{16} & x_{17} & x_{18} & x_{19} & x_{20} & x_{21} & x_{22} & x_{23} & x_{24}
\end{matrix}
\end{equation*}
 \noindent\rule{\textwidth}{0.6pt} \\
 
\noindent A global transition function $F: \{0,1,2\}^{24} \to \{0,1,2\}^{24}$ is given by
\begingroup
\allowdisplaybreaks
\begin{align*}
f_1 &= \min(\max(x_2,x_8),\min(\mathrm{not}(x_3),\mathrm{not}(x_4),\mathrm{not}(x_{23})))\\
f_2 &= \max(x_5^2 \bmod 3,x_6, x_{19})\\
f_3 &= \min(2+2x_5^2 \bmod 3,\mathrm{not}(x_6), \mathrm{not}(x_7))\\
f_4 &= \min(2+2x_5^2 \bmod 3,\mathrm{not}(x_6))\\
f_5 &= \mathrm{not}(x_1)\\
f_6 &= \max(\mathrm{not}(x_1),x_{19})\\
f_7 &= x_{15}\\
f_8 &= \max(x_{10},x_{13})\\
f_9 &= \min(\mathrm{not}(x_{10}),x_{22})\\
f_{10} &= \min(\mathrm{not}(x_8),x_9)\\
f_{11} &= \min(\max(x_1,x_{16},x_{21}),\mathrm{not}(x_{12}))\\
f_{12} &= x_{13}\\
f_{13} &= \max(\mathrm{not}(x_{14}),\max(x_{16},x_{18}))\\
f_{14} &= \min(\mathrm{not}(x_{11}),1+x_{13}+2x_{13}^2 \bmod 3)\\
f_{15} &= \max(\mathrm{not}(x_8),x_{11})\\
f_{16} &= \min(\max(x_{15},x_{17}),\mathrm{not}(x_{20}))\\
f_{17} &= \max(\mathrm{not}(x_{18}),x_{21})\\
f_{18} &= x_{16}\\
f_{19} &= x_{18}\\
f_{20} &= x_{21}\\
f_{21} &= x_{11}\\
f_{22} &= \min(x_{23},\min(\mathrm{not}(x_{10}),\mathrm{not}(x_{24})))\\
f_{23} &= \mathrm{not}(x_{22})\\
f_{24} &= x_{22}.
\end{align*}
\endgroup

This network is almost a generalized Boolean network with the exception of four local activation functions: $f_2, f_3, f_4$ and $f_{14}$. For example, the input nodes for $f_4$ are $x_5$ and $x_6$, but inside the $\min$ operator the variable $x_5$, which corresponds to the iron regulatory protein 1 (IRP1), appears in a polynomial function. The reason is that active IRP1 contributes less to the regulation of ferritin (Ft), the iron storage protein, meaning that when IRP1 = 2 it will have a weaker inhibitory impact than active IRP2. We termed such rules ``adjusted regulations'' in \cite{chifman2017} and they were necessary in order to adequately represent the strength of one biological species' control over another. 

We would like to argue that we can use Theorem~\ref{thm:gen_boolean} to conclude that this iron network is nonexpanding. Consider 
\begin{equation*}
f_4 = \min(2+2x_5^2 \bmod 3,\mathrm{not}(x_6))
\end{equation*}
and let $g(x_5) = 2+2x_5^2 \bmod 3$. Notice that $f_4$ is almost in $\max-\min$ form with one $\mathcal{M}_1(\mathbf{x})$ term:
\begingroup
\setlength{\abovedisplayskip}{7pt}
\setlength{\belowdisplayskip}{0pt}
\begin{equation*}
f_4 = \underbrace{\min(g(x_5),\mathrm{not}(x_6))}_\text{$\mathcal{M}_1(\mathbf{x})$}.
\end{equation*} 
\endgroup

The proof of Theorem~\ref{thm:gen_boolean} relies on the fact that for all close points $\mathbf{x}, \mathbf{y} \in X_n^m$ the corresponding entries $a_{ij}$ and $b_{ij}$ in $\mathcal{M}_i(\mathbf{x})~=~\min(a_{i1}, a_{i2}, \dots, a_{is})$ and $\mathcal{M}_i(\mathbf{y})~=~\min(b_{i1}, b_{i2}, \dots, b_{is})$ differ by at most one unit. Thus, all we need to check if for all $x_5, y_5 \in \{0,1,2\}$ such that  $\abs{x_5-y_5}~\leq~1$ we have $\abs{g(x_5)-g(y_5)}~\leq~1.$  Simple computation shows that this is true. Same logic and conclusions apply to $f_2, f_3$ and $f_{14}$. Therefore, we conclude that the iron network is nonexpanding and hence, by Theorem~\ref{thm:no_cycles} the corresponding $one$-step network does not have new cyclic attractors.   
 
When this iron network was first studied, having the information above would have been helpful. Our main goal in \cite{chifman2017} was to show that an oncogenic pathway alters the iron homeostasis pathway under a variety of experimental settings, such as knockout  or overexpression of critical species. 
Thus, we focused on attractors only.   We noticed early on that our original model included multiple cyclic attractors with $\pm 2$ jumps in addition to others, and we knew that running a corresponding $one$-step model would eliminate all cyclic attractors with jump discontinuities. However, we did not know if the corresponding $one$-step network would have new cyclic attractors at the time.  As a result, we had to run multiple models exhaustively. With current knowledge of nonexpanding networks, we could have extracted information from the original model about fixed points and cyclic attractors with $\pm 1$ jumps only, and then argued that under a $one$-step structure, the exact same attractors would be present, eliminating the need to run multiple models. 

\subsection{Iron Model that Fails to be Nonexpanding}
To demonstrate how quickly a network can fail the nonexpanding property, we can adjust, for example, the local activation function $f_{16}$ for Ras. The activity of Ras is controlled by a regulated GDP/GTP cycle, and its level of activation (GTP-bound~form~/~total~Ras) is regulated by the guanine nucleotide exchange factors and GTPase-activating  proteins (GAPs). Since the degree of Ras activation matters \cite{PMID:1549349,Boykevisch2006}, our original model considered three levels for Ras activation. On the other hand, if one is only concerned with Ras activity and not the levels, then Ras can be modeled as an on/off switch. One intuitive way to model this is to consider 0 as inactive and both $\{1,2\}$ as active, meaning when the activator of Ras, Son of Sevenless (SOS), is $\{1,2\}$ then Ras~=~2 (active), and when GAPs are $\{1,2\}$ then Ras~=~0 (inactive). The adjusted local activation function for Ras that reflects the above discussion is then
\begin{equation*}
f_{16} = \min(\max(x_{15}, 2 x_{17}^2 \bmod 3), 2+ x_{20}^2 \bmod 3).
\end{equation*}
Now, this network fails the nonexpanding property, since there are multiple close points $\mathbf{x}, \mathbf{y} \in \{0,1,2\}^{24}$ such that  
$\abs{f_{16}(\mathbf{x}) - f_{16}(\mathbf{y})}~=~2$. For example, one such pair of vectors would be obtained by setting all entries in  $\mathbf{x}$ to zero and all entries in $\mathbf{y}$ to zero except for $y_{17} = 1$. 

We have performed stochastic simulations with $10$-million random initial conditions and have found that the corresponding $one$-step model has at least three new cyclic attractors of length seven. To confirm that these three 7-cycles are not attractors without $one$-step structure, we nave traced a trajectory of one of the points from each 7-cycle. All trajectories converged to a different cyclic attractor.  Interestingly, the trajectories of vectors from each of the three cycles converged to the same discontinuous 7-cycle in the original network with modified local activation function for Ras. For simulations we have used our software~\url{https://steadycellphenotype.github.io}.

We have written a short R script that checks if an $n$-ary network is nonexpanding whenever Theorem~\ref{thm:gen_boolean} cannot be applied directly. We tested this script on the models discussed in this section and confirmed that the original model is nonexpanding, while the model with the  adjusted local activation function for Ras fails the property. The script is available at \url{https://github.com/chifman/Nonexpanding-networks}.

\section{Discussion}\label{sec:disc}
Networks in which variables take on $n>2$ discrete states have multiple scales of change for each variable, meaning that from one time step to the next a node can change by as much as $n-1$ units. To ensure that all nodes change by at most one unit, a $one$-step network was defined that depends on the structure and the global activation function of the original network. While these two networks are connected, the dynamics of their state spaces can be quite disparate. The results about cyclic attractors and trajectories of nonexpanding networks presented here are the first steps in understanding the state space dynamics between these networks. In particular, we have established that no new cyclic attractors are introduced in $one$-step networks, provided the original network was nonexpanding. Although the results in this article are specific to networks in which all variables take on the same number of discrete states and a synchronous update schedule, we hypothesize  that our results can be extended to other schemes, and in some cases the extensions might be trivial.    

Additionally, in this article we only considered the Chebyshev distance to define nonexpanding networks. The choice of this distance was motivated by the relation of the global function $F$ to its $one$-step function $h$. While other distances might lead to interesting conclusions and a class of networks with desirable properties, they do present subtle differences. Here, we briefly discuss the Taxicab distance (1~-~norm distance) and the Euclidean distance (2~-~norm distance) denoted by $d_1$ and $d_2$, respectively:
 \begin{equation}
d_1(\mathbf{x}, \mathbf{y}) = \sum_{v=1}^m \abs{x_v - y_v}, \quad d_2(\mathbf{x}, \mathbf{y}) = \left(\sum_{v=1}^m \abs{x_v - y_v}^2 \right)^{1/2}.
 \end{equation} 
 
We can modify the definition of ``closeness'' as follows. Let $\mathbf{x}, \mathbf{y}~\in~X_n^m$ and $i~\in~\{1,2\}$. We say that $\mathbf{x}$  is $d_i$~-~close to $\mathbf{y}$ if  $d_i(\mathbf{x}, \mathbf{y})~\leq~1$. Recall, $X_n~=~\{0,1,2,\dots,n-1\}$, thus this new definition implies that $\mathbf{x}$ and $\mathbf{y}$ differ by one unit in exactly one coordinate for both $d_1$ and $d_2$. Now we can say that a network is ``$d_i$-nonexpanding'' if $d_i(\mathbf{x}, \mathbf{y})~\leq~1$ implies $d_i(F(\mathbf{x}), F(\mathbf{y}))~\leq~1$ for all $\mathbf{x}, \mathbf{y}~\in~X_n^m$. One can show that if network $\mathcal{N}$ is $d_i$-nonexpanding then for all $\mathbf{x}, \mathbf{y}~\in~X_n^m$ 
\begin{align*}
d_1(F(\mathbf{x}), F(\mathbf{y})) &\leq d_1(\mathbf{x}, \mathbf{y})\\
d_2(F(\mathbf{x}), F(\mathbf{y})) &\leq \sqrt{m}\cdot d_2(\mathbf{x}, \mathbf{y}).
\end{align*}
However, if $\mathcal{N}$ is $d_i$-nonexpanding, the corresponding $one$-step network $\mathcal{H}$ may fail this property. For example, consider a ternary network $\mathcal{N}$ on two nodes given by 
\[F = (f_1, f_2) = (2+2x_1^2+x_1^2x_2^2 \bmod{3},x_2+2x_1^2x_2^2 \bmod{3}).\] This network is $d_i$-nonexpanding, but the corresponding $\mathcal{H}$ network is not.  For instance, let $\mathbf{x} = (0,2)$ and $\mathbf{y} = (1,2)$. Then 
\begin{align*}
d_i(\mathbf{x}, \mathbf{y}) &= 1 \quad \text{and}  \quad d_i(F(\mathbf{x}), F(\mathbf{y})) = d_i((2,2), (2,1)) = 1, \quad \text{but}\\
d_1(h(\mathbf{x}),h(\mathbf{y})) &= d_1((1,2), (2,1)) = 2\\
d_2(h(\mathbf{x}),h(\mathbf{y})) &= d_2((1,2), (2,1)) = \sqrt{2}.
\end{align*}  Additionally, this new definition does not guarantee that $h^{k-1}$ is {\em $d_i$-close} to $h^k$. Using the same example above, $d_1(\mathbf{y}, h(\mathbf{y}))~=~2$ and $d_2(\mathbf{y}, h(\mathbf{y}))~=~\sqrt{2}$. These properties played an important role in our proofs of Theorem~\ref{thm:composition} and Theorem~\ref{thm:no_cycles}. Moreover, we would like to point out that $d_i(F(\mathbf{x}), F(\mathbf{y}))$ may fail to be less than or equal to 1 for $d_i$~-~close points, and yet the network might have the property that no new cyclic attractors are added. Consider network $\mathcal{N}_1$ in Example~\ref{ex:networks}. Let $\mathbf{x}~=~(0,1,1)$ and $\mathbf{y}~=~(0,1,0)$. Then $d_i(\mathbf{x},\mathbf{y})~=~1$ but $d_1(F(\mathbf{x}), F(\mathbf{y}))~=~d_1((1,1,2), (0,0,2))~=~2~>~1$. Similarly, $d_2(F(\mathbf{x}), F(\mathbf{y}))~>~1$. 

We believe that networks in which $d_i(\mathbf{x}, \mathbf{y})~\leq~1$ implies $d_i(F(\mathbf{x}), F(\mathbf{y}))~\leq~1$ for $i~\in~\{1,2\}$ constitute a class of networks that has an intersection with the class of nonexpanding networks studied in this article in which no new cyclic attractors are added under the $one$-step function. However, it is not clear if $d_i$-closeness  is sufficient to guarantee the results of Theorem~\ref{thm:no_cycles}. For the time being we only pose this as a conjecture. 

These observations are important and will require careful investigation, which is the aim of our future work. We will also consider other definitions of closeness in an attempt to generalize our results and extend them to a wider class of networks. Another important question to consider is the stability of attractors in $one$-step networks, including fixed points. For example a basin of a fixed point can entirely disappear, raising a question about the importance of this particular fixed point ~{(see Figure~\ref{fig:1}~(f)~-~(g))}. Since fixed points between $one$-step networks and their original ones stay the same, it might be tempting to just compute fixed points and estimate their basins without passing to a $one$-step structure, missing the fact that this attractor might be biologically less meaningful if its basin can be dramatically reduced under a $one$-step function. We believe in the importance of pursuing these questions and providing analytical results, as they have the potential to be incorporated into modeling software and aid researchers  in their analyses of biological multi-state networks.

\section*{Author contributions}
Julia Chifman conceptualized, designed and directed the project. Original results for the nonexpanding {\em ternary networks} were derived by Etan Basser-Ravitz and Arman Darbar. All authors contributed to the extension of original results to multi-state networks. The first version of the manuscript was written by Julia Chifman and all authors commented and edited on subsequent versions of the manuscript. All authors read and approved the final manuscript.

\bibliographystyle{unsrtnat}

\begin{thebibliography}{1}

\bibitem{THOMAS19911}
René Thomas.
\newblock Regulatory networks seen as asynchronous automata: A logical
  description.
\newblock {\em Journal of Theoretical Biology}, 153(1):1--23, 1991.

\bibitem{THIEFFRY1995277}
Denis Thieffry and René Thomas.
\newblock Dynamical behaviour of biological regulatory networks—ii. immunity
  control in bacteriophage lambda.
\newblock {\em Bulletin of Mathematical Biology}, 57(2):277--297, 1995.

\bibitem{SANCHEZ2001115}
LUCAS SÁNCHEZ and DENIS THIEFFRY.
\newblock A logical analysis of the drosophila gap-gene system.
\newblock {\em Journal of Theoretical Biology}, 211(2):115--141, 2001.

\bibitem{espinosa2004}
Carlos Espinosa-Soto, Pablo Padilla-Longoria, and Elena~R. Alvarez-Buylla.
\newblock {A Gene Regulatory Network Model for Cell-Fate Determination during
  Arabidopsis thaliana Flower Development That Is Robust and Recovers
  Experimental Gene Expression Profiles[W]}.
\newblock {\em The Plant Cell}, 16(11):2923--2939, 11 2004.

\bibitem{Remy4042}
Elisabeth Remy, Sandra Rebouissou, Claudine Chaouiya, Andrei Zinovyev, Fran{\c
  c}ois Radvanyi, and Laurence Calzone.
\newblock A modeling approach to explain mutually exclusive and co-occurring
  genetic alterations in bladder tumorigenesis.
\newblock {\em Cancer Research}, 75(19):4042--4052, 2015.

\bibitem{ABOUJAOUDE2009561}
Wassim Abou-Jaoudé, Djomangan~A. Ouattara, and Marcelle Kaufman.
\newblock From structure to dynamics: Frequency tuning in the p53–mdm2
  network: I. logical approach.
\newblock {\em Journal of Theoretical Biology}, 258(4):561--577, 2009.

\bibitem{chifman2017}
Julia Chifman, Seda Arat, Zhiyong Deng, Erica Lemler, James~C. Pino, Leonard~A.
  Harris, Michael~A. Kochen, Carlos~F. Lopez, Steven~A. Akman, Frank~M. Torti,
  Suzy~V. Torti, and Reinhard Laubenbacher.
\newblock Activated oncogenic pathway modifies iron network in breast
  epithelial cells: A dynamic modeling perspective.
\newblock {\em PLOS Computational Biology}, 13(2):1--24, 02 2017.

\bibitem{Setty7702}
Y.~Setty, A.~E. Mayo, M.~G. Surette, and U.~Alon.
\newblock Detailed map of a cis-regulatory input function.
\newblock {\em Proceedings of the National Academy of Sciences},
  100(13):7702--7707, 2003.

\bibitem{Knapp2021}
Adam~C. Knapp, Luis~Sordo Vieira, Reinhard Laubenbacher, and Julia Chifman.
\newblock Steadycellphenotype: A web-based tool for the modeling of biological
  networks with ternary logic.
\newblock {\em bioRxiv}, 2021.

\bibitem{chifman2012}
J.~Chifman, A.~Kniss, P.~Neupane, I.~Williams, B.~Leung, Z.~Deng, P.~Mendes,
  V.~Hower, F.M. Torti, S.A. Akman, S.V. Torti, and R.~Laubenbacher.
\newblock The core control system of intracellular iron homeostasis: A
  mathematical model.
\newblock {\em Journal of Theoretical Biology}, 300(0):91 -- 99, 2012.

\bibitem{Vieira2442}
Luis~Sordo Vieira, Joseph Masison, Jacqueline Adams, Suzy Torti, Reinhard
  Laubenbacher, and Julia Chifman.
\newblock Abstract 2442: Assessing how macrophages attain distinct iron
  handling phenotypes in a tumor via a mathematical model.
\newblock {\em Cancer Research}, 79(13 Supplement):2442--2442, 2019.

\bibitem{KAUFFMAN1969437}
S.A. Kauffman.
\newblock Metabolic stability and epigenesis in randomly constructed genetic
  nets.
\newblock {\em Journal of Theoretical Biology}, 22(3):437--467, 1969.

\bibitem{GLASS1973103}
Leon Glass and Stuart~A. Kauffman.
\newblock The logical analysis of continuous, non-linear biochemical control
  networks.
\newblock {\em Journal of Theoretical Biology}, 39(1):103--129, 1973.

\bibitem{sui1999}
Sui Huang.
\newblock Gene expression profiling, genetic networks, and cellular states: An
  integrating concept for tumorigenesis and drug discovery.
\newblock {\em Journal of molecular medicine (Berlin, Germany)}, 77:469--80, 07
  1999.

\bibitem{klemm2005}
Konstantin Klemm and Stefan Bornholdt.
\newblock Stable and unstable attractors in boolean networks.
\newblock {\em Physical review. E, Statistical, nonlinear, and soft matter
  physics}, 72:055101, 12 2005.

\bibitem{Bornholdt2008BooleanNM}
S.~Bornholdt.
\newblock Boolean network models of cellular regulation: prospects and
  limitations.
\newblock {\em Journal of The Royal Society Interface}, 5:S85 -- S94, 2008.

\bibitem{10.1371/journal.pone.0195126}
Lea Siegle, Julian~D. Schwab, Silke~D. Kühlwein, Ludwig Lausser, Stefan
  Tümpel, Astrid~S. Pfister, Michael Kühl, and Hans~A. Kestler.
\newblock A boolean network of the crosstalk between igf and wnt signaling in
  aging satellite cells.
\newblock {\em PLOS ONE}, 13(3):1--24, 03 2018.

\bibitem{RIBEIRO2007743}
Andre~S. Ribeiro and Stuart~A. Kauffman.
\newblock Noisy attractors and ergodic sets in models of gene regulatory
  networks.
\newblock {\em Journal of Theoretical Biology}, 247(4):743--755, 2007.

\bibitem{SCHWAB2020571}
Julian~D. Schwab, Silke~D. Kühlwein, Nensi Ikonomi, Michael Kühl, and Hans~A.
  Kestler.
\newblock Concepts in boolean network modeling: What do they all mean?
\newblock {\em Computational and Structural Biotechnology Journal},
  18:571--582, 2020.

\bibitem{RICHARD20111085}
Adrien Richard.
\newblock Local negative circuits and fixed points in non-expansive boolean
  networks.
\newblock {\em Discrete Applied Mathematics}, 159(11):1085--1093, 2011.

\bibitem{RICHARD20151}
Adrien Richard.
\newblock Fixed point theorems for boolean networks expressed in terms of
  forbidden subnetworks.
\newblock {\em Theoretical Computer Science}, 583:1--26, 2015.

\bibitem{west_introduction_2000}
Douglas~B. West.
\newblock {\em Introduction to Graph Theory}.
\newblock Prentice Hall, 2 edition, September 2000.

\bibitem{chartrand1993applied}
G.~Chartrand and O.R. Oellermann.
\newblock {\em Applied and Algorithmic Graph Theory}.
\newblock International series in pure and applied mathematics. McGraw-Hill,
  1993.

\bibitem{veliz2010}
Alan Veliz-Cuba, Abdul~Salam Jarrah, and Reinhard Laubenbacher.
\newblock {Polynomial algebra of discrete models in systems biology}.
\newblock {\em Bioinformatics}, 26(13):1637--1643, 05 2010.

\bibitem{VELIZCUBA2019167}
Alan Veliz-Cuba and Reinhard Laubenbacher.
\newblock Dynamics of semilattice networks with strongly connected dependency
  graph.
\newblock {\em Automatica}, 99:167 -- 174, 2019.

\bibitem{Brown2020}
Rikki A.~M. Brown, Kirsty~L. Richardson, Tasnuva~D. Kabir, Debbie Trinder, Ruth
  Ganss, and Peter~J. Leedman.
\newblock Altered iron metabolism and impact in cancer biology, metastasis, and
  immunology.
\newblock {\em Frontiers in Oncology}, 10:476, 2020.

\bibitem{PMID:1549349}
K~Muroya, S~Hattori, and S~Nakamura.
\newblock Nerve growth factor induces rapid accumulation of the gtp-bound form
  of p21ras in rat pheochromocytoma pc12 cells.
\newblock {\em Oncogene}, 7(2):277---281, February 1992.

\bibitem{Boykevisch2006}
Sean Boykevisch, Chen Zhao, Holger Sondermann, Polyxeni Philippidou, Simon
  Halegoua, John Kuriyan, and Dafna Bar-Sagi.
\newblock Regulation of ras signaling dynamics by sos-mediated positive
  feedback.
\newblock {\em Curr Biol}, 16(21):2173--2179, Nov 2006.

\end{thebibliography}

\newpage
\section*{Appendix}
\appendix
\section{Additional properties, proofs and examples}\label{ap:proofs_examples}

\textbf{Some properties of $\max, \min$ and $\mathrm{not}$ operators.}
\vspace{0.07in}

\noindent Let $x,y,z \in X_n=\{0,1,\dots,n-1\}$.

\begin{subequations}\label{eq:max_min_prop}
\noindent \text{Associative property:} 
\begin{align}
& \max(x, \max(y,z)) = \max(\max(x,y),z) \label{subeq:a}\\ 
& \min(x, \min(y,z)) = \min(\min(x,y),z).\label{subeq:b}
\end{align}
\text{Distributive property:} 
\begin{align}
& \min(x, \max(y,z)) = \max(\min(x,y),\min(x,z))\label{subeq:c}\\ 
& \max(x, \min(y,z)) = \min(\max(x,y),\max(x,z)).\label{subeq:d}
\end{align}
\text{De Morgan's laws:} 
\begin{align}
& \min(\mathrm{not}(x),\mathrm{not}(y)) = \mathrm{not}(\max(x,y))\label{subeq:e}\\
& \max(\mathrm{not}(x),\mathrm{not}(y)) = \mathrm{not}(\min(x,y)).\label{subeq:f}
\end{align}
\text{Absorption properties:}
\begin{align}
& \max(x, \min(x,y)) = x\label{subeq:g}\\
& \min(x, \max(x,y)) = x.\label{subeq:h}
\end{align}
\text{Other properties:}
\begin{align}
& \max(x, 0) =x\label{subeq:i}\\
& \min(x, n-1) = x\label{subeq:j}\\
& \mathrm{not}(\mathrm{not}(x)) = x. \label{subeq:k}
\end{align}
\end{subequations}
\vspace{0.1in}

\begin{proof}[\textbf{Proof of Lemma~\ref{lem:nonex_h}}]
Let $\mathbf{x}, \mathbf{y} \in X_n^m$ so that $d_\infty(\mathbf{x}, \mathbf{y}) \leq 1$.  
Since $\mathbf{x}$ is close to $\mathbf{y}$ then 
\begin{equation}\label{eq:close}
y_v = x_v-1 \quad \text{or} \quad y_v = x_v \quad \text{or}  \quad y_v = x_v+1,
\end{equation} 
where $x_v, y_v \in X_n$ are the current states of the node $v$. Let $f_v(\mathbf{x}) = k$ for some $k \in X_n$. This means that $f_v(\mathbf{y})$ takes on the values in the set $\{k-1, k, k+1\}$ because $F$ is nonexpanding. 

{\em Case 1.} Suppose $x_v < k$,  then according to Definition~\ref{cont} $h_v(\mathbf{x}) = x_v +1$. 

If $f_v(\mathbf{y}) = k-1$ then Definition~\ref{cont} and \ref{eq:close} together with the assumption that $x_v < k$ imply
\begin{equation*}
  h_v(\mathbf{y})= 
  \begin{cases}
    x_v &\text{if $y_v = x_v-1$} \\
    x_v +1 \ \text{or} \ x_v &\text{if $y_v = x_v$}\\
    x_v+2  \ \text{or} \ x_v+1 \ \text{or} \ x_v &\text{if $y_v = x_v +1$}.
  \end{cases}
\end{equation*}

If $f_v(\mathbf{y}) = k$ then using the same assumptions as above we get 
\begin{equation*}
  h_v(\mathbf{y})= 
  \begin{cases}
    x_v &\text{if $y_v = x_v-1$} \\
    x_v +1  &\text{if $y_v = x_v$}\\
    x_v+2  \ \text{or} \ x_v+1 &\text{if $y_v = x_v +1$}.
  \end{cases}
\end{equation*}

Finally, when $f_v(\mathbf{y}) = k+1$ then 
\begin{equation*}
  h_v(\mathbf{y})= 
  \begin{cases}
    x_v &\text{if $y_v = x_v-1$} \\
    x_v +1  &\text{if $y_v = x_v$}\\
    x_v+2  &\text{if $y_v = x_v +1$}.
  \end{cases}
\end{equation*}

Thus all possibilities lead to $\abs{h_v(\mathbf{x}) - h_v(\mathbf{y})} \leq 1$. 

{\em Case 2.} Suppose $x_v =k$,  then according to Definition~\ref{cont} $h_v(\mathbf{x}) = x_v $. Using similar logic as in Case~1, we get that the  possible values for $h_v(\mathbf{y})$ are $x_v-1, x_v$ or $x_v+1$, implying that $\abs{h_v(\mathbf{x}) - h_v(\mathbf{y})}~\leq~1$. 

{\em Case 3.} Suppose $x_v > k$,  then according to Definition~\ref{cont} $h_v(\mathbf{x}) = x_v -1$. Now, we get that the possible values for $h_v(\mathbf{y})$ are $x_v-2, x_v-1$ or $x_v$, which also leads to $\abs{h_v(\mathbf{x}) - h_v(\mathbf{y})} \leq 1$. 

In all cases, $d_\infty(h(\mathbf{x}), h(\mathbf{y})) = \max_{v \leq m} \abs{h_v(\mathbf{x}) - h_v(\mathbf{y})} \leq 1$. Thus, $\mathcal{H}$ is nonexpanding. 
\end{proof}

 \begin{lemma}\label{lem:dist1} Suppose $\mathcal{N} = (G,F)$ is an $n$-ary network with the corresponding $one$-step network 
 $\mathcal{H}~=~(\mathcal{N},h)$, then  $d_\infty(h(\mathbf{x}), F(\mathbf{x})) \leq n-2$ for all $\mathbf{x} \in X_n^m$. 
 \end{lemma}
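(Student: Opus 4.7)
The plan is to reduce the claim to a coordinate-wise bound, since
\[
d_\infty(h(\mathbf{x}), F(\mathbf{x})) = \max_{v \leq m} \abs{h_v(\mathbf{x}) - f_v(\mathbf{x})},
\]
so it suffices to show $\abs{h_v(\mathbf{x}) - f_v(\mathbf{x})} \leq n-2$ for every node $v$. First I would split into the three cases of Definition~\ref{cont} based on the comparison between $x_v$ and $f_v(\mathbf{x})$. In the equality case, $h_v(\mathbf{x}) = x_v = f_v(\mathbf{x})$, so the gap is $0$. If $x_v < f_v(\mathbf{x})$, then $h_v(\mathbf{x}) = x_v + 1$, giving $\abs{h_v(\mathbf{x}) - f_v(\mathbf{x})} = f_v(\mathbf{x}) - x_v - 1$. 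Symmetrically, if $x_v > f_v(\mathbf{x})$, then $\abs{h_v(\mathbf{x}) - f_v(\mathbf{x})} = x_v - f_v(\mathbf{x}) - 1$. In all three cases the gap equals $\max(0, \abs{x_v - f_v(\mathbf{x})} - 1)$, i.e., the one-step rule shrinks the gap between the current value and the target value by exactly one whenever they differ.

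Next, I would exploit the finite range $X_n = \{0,1,\dots,n-1\}$: both $x_v$ and $f_v(\mathbf{x})$ lie in $X_n$, hence $\abs{x_v - f_v(\mathbf{x})} \leq n-1$, with equality only when one of them is $0$ and the other is $n-1$. Plugging this into the bound above yields
\[
\abs{h_v(\mathbf{x}) - f_v(\mathbf{x})} \leq (n-1) - 1 = n-2.
\]
Taking the maximum over $v \in \{1,\dots,m\}$ gives $d_\infty(h(\mathbf{x}), F(\mathbf{x})) \leq n-2$, as desired.

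There is no real obstacle here; the argument is a direct case analysis together with the trivial observation that differences of elements of $X_n$ are at most $n-1$. The only thing worth noting is that the bound $n-2$ is sharp, attained for instance when $x_v = 0$ and $f_v(\mathbf{x}) = n-1$ (so $h_v(\mathbf{x}) = 1$ and the gap is exactly $n-2$), which shows the lemma cannot be strengthened without further hypotheses on $\mathcal{N}$.
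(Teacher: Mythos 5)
Your proof is correct and follows essentially the same route as the paper's: reduce to the coordinate-wise bound, observe that the one-step rule leaves a gap of $\max(0,\abs{x_v - f_v(\mathbf{x})}-1)$, and bound $\abs{x_v - f_v(\mathbf{x})}$ by $n-1$ using the finite range $X_n$. The explicit closed form for the gap and the sharpness remark are small refinements over the paper's case analysis, but the argument is the same.
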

 \begin{proof}
 
Since  $d_\infty(h(\mathbf{x}), F(\mathbf{x})) = \max_{v \leq m} \abs{h_v(\mathbf{x}) - f_v(\mathbf{x})}$, we only need to show that \\
$\abs{h_v(\mathbf{x}) - f_v(\mathbf{x})}~\leq~{n-2}$ for each local activation function.  Suppose $f_v(\mathbf{x})~=~k$ for some $k \in X_n$. Then according to Definition~\ref{cont} the values of $h_v$ will depend on the values of $x_v$, which is the current state of the node $v$. Note, if the value of $x_v$ were  $k-1, k$, or $k+1$, then $f_v (\mathbf{x})~=~h_v(\mathbf{x})$ and $\abs{h_v(\mathbf{x}) - f_v(\mathbf{x})}~=~0$. 

If $x_v \notin \{k-1, k, k+1\}$ then $h_v(\mathbf{x})~=~x_v+1$ if $x_v<k$ or $h_v(\mathbf{x})~=~x_v-1$ if $x_v>k$. 
Since $x_v, k \in X_n$,
the largest possible distance between $h_v(\mathbf{x})$ and $f_v(\mathbf{x})$ is $n-2$. In particular, 
$d_\infty(h(\mathbf{x}), F(\mathbf{x}))~\leq~n-2$.
 \end{proof}
 
  \begin{lemma}\label{lem:properties2}
 Suppose $\mathcal{N} = (G,F)$ is a nonexpanding $n$-ary network with the corresponding $one$-step network $\mathcal{H} = (\mathcal{N},h)$. Let $\mathbf{x} \in X_n^m$ and $i \in \mathbb{N}$. If $F(h^{i-1}(\mathbf{x}))~=~h(h^{i-1}(\mathbf{x}))$ then  $F(h^{i}(\mathbf{x}))~=~h(h^{i}(\mathbf{x}))$.
 \end{lemma}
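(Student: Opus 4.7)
The plan is to reduce the conclusion to the closeness characterization already noted just after Remark~\ref{rm:fixed_pt}: namely, that $F(\mathbf{z}) = h(\mathbf{z})$ if and only if $d_\infty(F(\mathbf{z}),\mathbf{z}) \leq 1$. So with $\mathbf{z} := h^{i}(\mathbf{x})$, it suffices to show $d_\infty(F(h^{i}(\mathbf{x})),h^{i}(\mathbf{x})) \leq 1$.

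To do this, I would set $\mathbf{y} := h^{i-1}(\mathbf{x})$, so that $h^{i}(\mathbf{x}) = h(\mathbf{y})$. The key ingredients are then two facts already established in the paper. First, by the definition of the $one$-step function, each coordinate of $h$ changes by at most one unit per step, so $d_\infty(\mathbf{y}, h(\mathbf{y})) \leq 1$; this was noted explicitly in the paragraph right before Lemma~\ref{lem:properties}. Second, the hypothesis of the lemma says $F(\mathbf{y}) = h(\mathbf{y})$.

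Combining these: since $\mathcal{N}$ is nonexpanding and $d_\infty(\mathbf{y}, h(\mathbf{y})) \leq 1$, Definition~\ref{defn:close} (or Lemma~\ref{lem:nonexp}) gives $d_\infty(F(\mathbf{y}), F(h(\mathbf{y}))) \leq 1$. Substituting $F(\mathbf{y}) = h(\mathbf{y})$ on the left turns this into $d_\infty(h(\mathbf{y}), F(h(\mathbf{y}))) \leq 1$, i.e., $F(h(\mathbf{y}))$ is close to $h(\mathbf{y})$. Invoking the closeness characterization in the opposite direction yields $F(h(\mathbf{y})) = h(h(\mathbf{y}))$, which is exactly $F(h^{i}(\mathbf{x})) = h(h^{i}(\mathbf{x}))$.

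There is no real obstacle here; the statement is essentially a one-line consequence of nonexpansiveness together with the fact that successive iterates of $h$ are automatically close. The only care needed is the bookkeeping between $\mathbf{y}$, $h(\mathbf{y})$, and $F(\mathbf{y})$ to make sure the nonexpanding inequality is applied to the right pair. Note that the statement also ties in naturally with Lemma~\ref{lem:properties}: iterating this one-step ``propagation of agreement'' is exactly what lets one promote $F(\mathbf{x}) = h(\mathbf{x})$ to $F^i(\mathbf{x}) = h^i(\mathbf{x})$ for all $i$.
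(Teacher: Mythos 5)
Your argument is correct and is essentially identical to the paper's proof: both apply the nonexpanding property to the close pair $h^{i-1}(\mathbf{x})$ and $h^{i}(\mathbf{x})$, substitute the hypothesis $F(h^{i-1}(\mathbf{x})) = h^{i}(\mathbf{x})$, and then invoke the characterization that $F(\mathbf{z}) = h(\mathbf{z})$ exactly when $d_\infty(F(\mathbf{z}),\mathbf{z}) \leq 1$. The only difference is cosmetic bookkeeping via the substitution $\mathbf{y} = h^{i-1}(\mathbf{x})$.
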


\begin{proof}
Suppose $F(h^{i-1}(\mathbf{x})) = h(h^{i-1}(\mathbf{x})) = h^i(\mathbf{x})$. Since $h^{i-1}(\mathbf{x})$ is close to $h^{i}(\mathbf{x})$ and $F$ is nonexpanding, we get
\[
d_\infty(F(h^{i}(\mathbf{x})), h^{i}(\mathbf{x})) 	= d_\infty(F(h^{i}(\mathbf{x})), F(h^{i-1}(\mathbf{x}))) \leq 1.
\]
Thus we conclude that $F(h^{i}(\mathbf{x})) = h(h^{i}(\mathbf{x}))$.
\end{proof}

\begin{example}\label{ex:4}
Let $G$ be a digraph as depicted in Figure~\ref{fig:3}(a) and let $X_3 = \{0,1,2\}$. Let a global transition function $F=(f_1, f_2): X_3^2 \to X_3^2$ given by
    \begin{align*}
    f_1(\mathbf{x}) &= (2+x_1+x_2+2x_1 x_2 +x_1^2 x_2^2)\bmod{3}\\
    f_2(\mathbf{x}) &= x_1.\\
    \end{align*}
The network $\mathcal{N}$ fails to be nonexpanding and has one fixed point. The corresponding $one$-step network $\mathcal{H}$ is nonexpanding, but notice it now has a new 3-cycle. 
\end{example}

\begin{figure}[h!]
\centering
  \includegraphics[scale=0.75]{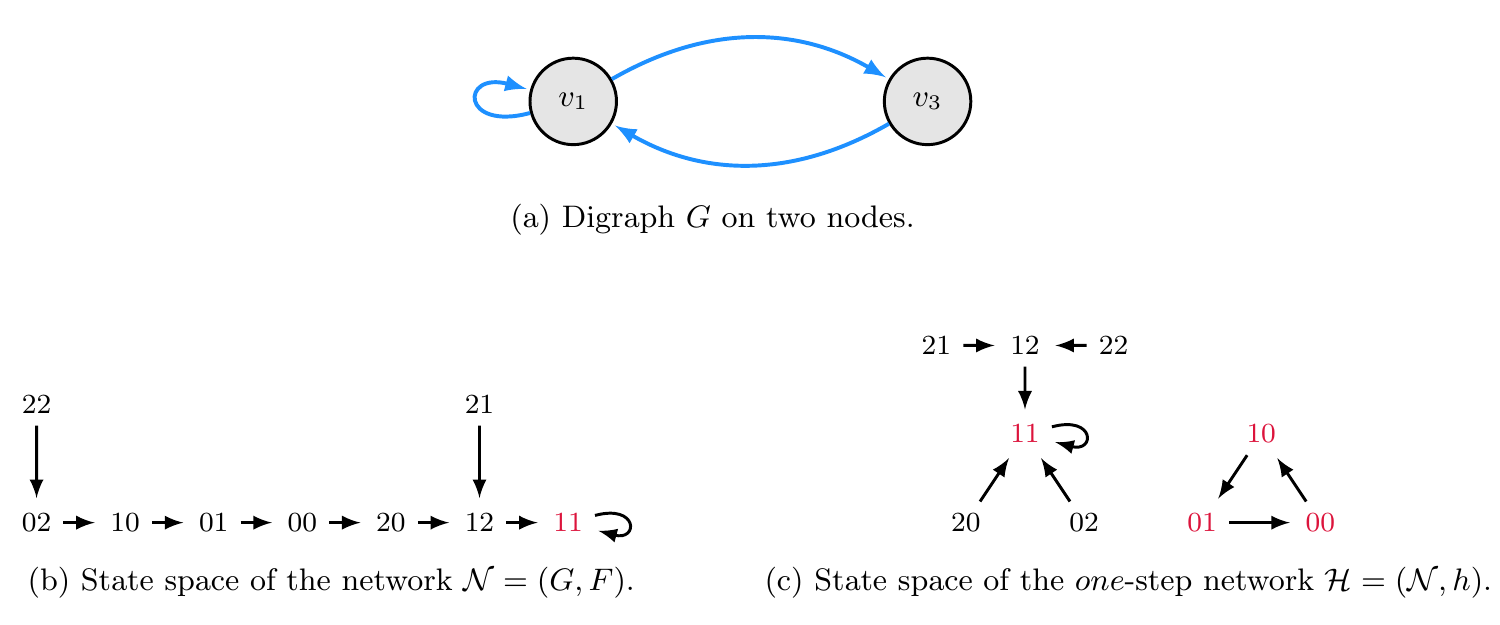}
\caption{Example of ternary network. Panel (a) Interaction graph. Panel (b) displays  state space  dynamics for network as described in Example~\ref{ex:4}. Panel (c) displays dynamics of the corresponding $one$-step network; $one$-step networks are described in the Section~\ref{sec:one-step}. Fixed points and cycle attractors are colored in red. 
}
\label{fig:3}      
\end{figure}

\end{document}